\newcommand\ti[1]{\textit{#1}}
\newcommand\tf[1]{\textbf{#1}}
\newcommand\ttt[1]{\texttt{#1}}
\def\ie{\textit{i.e.}}
\def\eg{\textit{e.g.}}
\newcommand{\myparagraph}[1]{\vspace{1pt}\noindent{\bf{#1}}~~}
\renewcommand{\paragraph}{%
  \@startsection{paragraph}{4}%
  {\z@}{0em}{-1em}%
  {\normalfont\normalsize\bfseries}%
}
\definecolor{lightgray}{gray}{0.75}
\definecolor{lightergray}{gray}{0.85}
\definecolor{Blue}{RGB}{3, 31, 97}
\definecolor{Blue1}{RGB}{214, 235, 245}
\definecolor{Blue2}{RGB}{235, 245, 250}
\definecolor{Gray}{RGB}{247, 252, 255}
\definecolor{convcolor}{HTML}{412F8A}
\definecolor{resnetcolor}{HTML}{8DA0CB}
\definecolor{vitcolor}{HTML}{fc8e62}
\newcommand{\convcolor}[1]{\textcolor{convcolor}{#1}}
\newcommand{\vitcolor}[1]{\textcolor{vitcolor}{#1}}
\definecolor{aliceblue}{rgb}{0.94, 0.97, 1.0}
\newcommand{\vb}{\vitcolor{$\mathbf{\circ}$\,}}
\newcommand{\cb}{\convcolor{$\bullet$\,}}
\newcommand{\gr}{\rowcolor[gray]{.95}}
\newcommand{\cgr}{\cellcolor[gray]{0.95}}
\newcommand{\y}{{\bf y}}
\newcommand{\x}{{\bf x}}
\newcommand{\z}{{\bf z}}
\newcommand{\rr}{{\bf r}}
\newcommand{\uu}{{\bf u}}
\newcommand{\s}{{\bf s}}
\newcommand{\vv}{{\bf v}}
\newcommand{\w}{{\bf w}}
\newcommand{\E}{{\mathcal{E}}}
\newcommand{\alg}{\textsc{MONA}\xspace}
\newcommand{\byol}{\textsc{BYOL}\xspace}
\newcommand{\simclr}{\textsc{SimCLR}\xspace}
\newcommand{\moco}{\textsc{MoCo}\xspace}
\newcommand{\mocoo}{\textsc{MoCov2}\xspace}
\newcommand{\mocoknn}{\textsc{$k$NN-MoCo}\xspace}
\newcommand{\isd}{\textsc{ISD}\xspace}
\newcommand{\fcn}{\textsc{FCN}\xspace}
\newcommand{\unet}{\textsc{UNet}\xspace}
\newcommand{\fpn}{\textsc{FPN}\xspace}
\newcommand{\unetF}{\textsc{UNet-F}\xspace}
\newcommand{\unetL}{\textsc{UNet-L}\xspace}
\newcommand{\glc}{\textsc{GLCon}\xspace}
\newcommand{\emm}{\textsc{EM}\xspace}
\newcommand{\cct}{\textsc{CCT}\xspace}
\newcommand{\dan}{\textsc{DAN}\xspace}
\newcommand{\urpc}{\textsc{URPC}\xspace}
\newcommand{\dct}{\textsc{DCT}\xspace}
\newcommand{\ict}{\textsc{ICT}\xspace}
\newcommand{\mt}{\textsc{MT}\xspace}
\newcommand{\uamt}{\textsc{UAMT}\xspace}
\newcommand{\cps}{\textsc{CPS}\xspace}
\newcommand{\scs}{\textsc{SCS}\xspace}
\newcommand{\gcl}{\textsc{GCL}\xspace}
\newcommand{\plc}{\textsc{PLC}\xspace}
\newcommand{\simcvd}{\textsc{SimCVD}\xspace}
\newcommand{\mms}{\textsc{MMS}\xspace}
\crefname{section}{Sec.}{Secs.}
\Crefname{section}{Section}{Sections}
\Crefname{table}{Table}{Tables}
\crefname{table}{Tab.}{Tabs.}
\newcommand{\ssymbol}[1]{$^{\@fnsymbol{#1}}$}
\begin{document}

\title{Mine yOur owN Anatomy: \\ Revisiting Medical Image Segmentation with Extremely Limited Labels}

\author{Chenyu~You\ssymbol{1},
        Weicheng~Dai\ssymbol{1},
        Fenglin~Liu,
        Yifei~Min,
        Nicha C. Dvornek,
        Xiaoxiao~Li,
        David~A.~Clifton,
        Lawrence~Staib,
        and~James~S.~Duncan, \IEEEmembership{Life Fellow, IEEE}% <-this % stops a space
\IEEEcompsocitemizethanks{\IEEEcompsocthanksitem Chenyu~You, Weicheng~Dai, Yifei~Min, Lawrence~Staib and James~S.~Duncan are with Yale University, New Haven 06510, U.S.A.. E-mail: \{chenyu.you,weicheng.dai,yifei.min, lawrence.staib, james.duncan\}@yale.edu.
\IEEEcompsocthanksitem Fenglin Liu and David A. Clifton are with University of Oxford, OX3 7DQ Oxford, U.K.. DAC is also with the Oxford-Suzhou Centre for Advanced Research, Suzhou, China. 
E-mail: \{fenglin.liu, david.clifton\}@eng.ox.ac.uk.
\IEEEcompsocthanksitem Xiaoxiao Li is with University of British Columbia, Vancouver, BC V6T 1Z4, Canada.\protect
E-mail: xiaoxiao.li@ece.ubc.ca.
\IEEEcompsocthanksitem \ssymbol{1} Equal Contribution.
}% <-this % stops a space
\thanks{(Corresponding author: Chenyu~You.)}
}

% The paper headers
\markboth{IEEE TRANSACTIONS ON PATTERN ANALYSIS AND MACHINE INTELLIGENCE}%
{Shell \MakeLowercase{\textit{et al.}}: Bare Advanced Demo of IEEEtran.cls for IEEE Computer Society Journals}

\IEEEtitleabstractindextext{%
\begin{abstract}
Recent studies on contrastive learning have achieved remarkable performance solely by leveraging few labels in the context of medical image segmentation. Existing methods mainly focus on instance discrimination and invariant mapping (\ie, pulling positive samples closer and negative samples apart in the feature space). However, they face three common pitfalls: 
(1) {{tailness}}: medical image data usually follows an implicit long-tail class distribution. Blindly leveraging all pixels in training hence can lead to the data imbalance issues, and cause deteriorated performance; (2) {{consistency}}: it remains unclear whether a segmentation model has learned meaningful and yet consistent anatomical features due to the intra-class variations between different anatomical features; and (3) {{diversity}}: the intra-slice correlations within the entire dataset have received significantly less attention. This motivates us to seek a principled approach for strategically making use of the dataset itself to discover similar yet distinct samples from {{different anatomical views}}. In this paper, we introduce a novel semi-supervised 2D medical image segmentation framework termed \textbf{M}ine y\tf{O}ur ow\tf{N} \tf{A}natomy (\alg), and make three contributions.
First, prior work argues that every pixel equally matters to the model training; we observe empirically that this alone is unlikely to define meaningful anatomical features, mainly due to lacking the supervision signal. We show two simple solutions towards learning invariances -- through the use of stronger data augmentations and nearest neighbors. Second, we construct a set of objectives that encourage the model to be capable of decomposing medical images into a collection of anatomical features in an unsupervised manner.
Lastly, we both empirically and theoretically, demonstrate the efficacy of our \alg on three benchmark datasets with different labeled settings, achieving new state-of-the-art under different labeled semi-supervised settings. \alg makes minimal assumptions on domain expertise, and hence constitutes a practical and versatile solution in medical image analysis. {We provide the PyTorch-like pseudo-code in supplementary. Codes will be available on {{\href{https://github.com/charlesyou999648/MONA}{here}}}.}
\end{abstract}

% Note that keywords are not normally used for peerreview papers.
\begin{IEEEkeywords}
Semi-supervised Learning, Contrastive Learning, Imbalanced Learning, Long-tailed Medical Image Segmentation.
\end{IEEEkeywords}}

\maketitle

\IEEEdisplaynontitleabstractindextext

\IEEEpeerreviewmaketitle

\ifCLASSOPTIONcompsoc

\IEEEraisesectionheading{\section{Introduction}\label{section:intro}}
\else
\section{Introduction}
\label{section:intro}
\fi
% Computer Society journal (but not conference!) papers do something unusual
% with the very first section heading (almost always called "Introduction").
% They place it ABOVE the main text! IEEEtran.cls does not automatically do
% this for you, but you can achieve this effect with the provided
% \IEEEraisesectionheading{} command. Note the need to keep any \label that
% is to refer to the section immediately after \section in the above as
% \IEEEraisesectionheading puts \section within a raised box.

\IEEEPARstart{W}{ith} 
the advent of deep learning, medical image segmentation has drawn great attention and substantial research efforts in recent years. Traditional supervised training schemes coupled with large-scale annotated data can engender remarkable performance. However, training with massive high-quality annotated data is infeasible in clinical practice since a large amount of expert-annotated medical data often incurs considerable clinical expertise and time. Under such a setting, this poses the question of how models benefit from a large amount of unlabelled data during training. Recently emerged methods based on contrastive learning (CL) significantly reduce the training cost by learning strong visual representations in an unsupervised manner \cite{wu2018unsupervised,oord2018representation,hjelm2018learning,chen2020simple,he2020momentum,grill2020bootstrap,chen2020improved,caron2020unsupervised,quan2022information,li2022mixcl}. A popular way of formulating this idea is through imposing feature consistency to differently augmented views of the same image - which treats each view as an individual instance.

\begin{figure*}[t]
\centering
\includegraphics[width=0.95\linewidth]{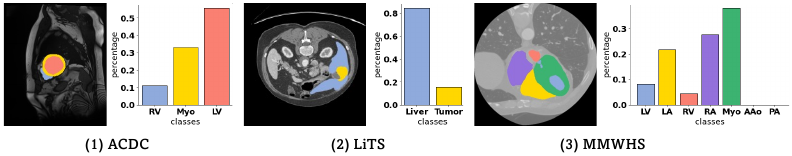}
\vspace{-10pt}
\caption{Examples of three benchmarks (\ie, ACDC, LiTS, MMWHS) with long-tail class distributions. As observed, the ratios of different label classes over three benchmarks are imbalanced.} 
\label{fig:distribution}
\vspace{-10pt}
\end{figure*}

Despite great promise, the main technical challenges remain: (1) How far is CL from becoming a principled framework for medical image segmentation? (2) Is there any better way to implicitly learn some intrinsic properties from the original data (\ie, the inter-instance relationships and intra-instance invariance)? (3) What will happen if models can only access a few labels in training?

% As a core problem in medical image analysis 
To address the above challenges, we outline three principles below: (1) {\em{tailness}}: existing approaches inevitably suffer from class collapse problems -- wherein similar pairs from the same latent class are assumed to have the same representation \cite{arora2019theoretical,chuang2020debiased,li2021prototypical}. This assumption, however, rarely holds for real-world clinical data. 
We observe that the long-tail distribution problem has received increasing attention in the computer vision community \cite{kang2020exploring,zhu2014capturing,cui2019class,yang2020rethinking,jiang2021improving}. 
In contrast, there have been few prior long-tail works for medical image segmentation. 
For example, as illustrated in Figure \ref{fig:distribution}, most medical images follow a Zipf long-tail distribution where various anatomical features share very different class frequencies, which can result in worse performance; (2) {\em{consistency}}: considering the scarcity of medical data in practice, augmentations are a widely adopted pre-text task to learn meaningful representations. 
Intuitively, the anatomical features should be semantically consistent across different transformations and deformations. 
Thus, it is important to assess whether the model is robust to diverse views of anatomy; (3) {\em{diversity}}: recent work \cite{zheng2021ressl,azabou2021mine,van2021revisiting} pointed out that going beyond simple augmentations to create more diverse views can learn more discriminative anatomical features. 
At the same time, this is particularly challenging to both introduce sufficient diversity and preserve the anatomy of the original data, especially in data-scarce clinical scenarios. To deploy into the wild, we need to quantify and address three research gaps from {\em different anatomical views}.

In this paper, we present \textbf{M}ine y\textbf{O}ur ow\textbf{N} \textbf{A}natomy (\alg), a novel contrastive semi-supervised 2D medical segmentation framework, based on different anatomical views. The workflow of \alg is illustrated in Figure~\ref{fig:framework}. The \textbf{key innovation} in \alg is to seek diverse views (\ie, augmented/mined views) of different samples whose anatomical features are {\em{homogeneous}} within the {\em{same class type}}, while {\em{distinctive}} for {\textit{different class types}}. We make the following contributions. First, we consider the problem of {\em{tailness}}. An issue is that label classes within medical images typically exhibit a long-tail distribution. Another one, technically more challenging, is the fact that there is only a few labeled data and large quantities of unlabeled ones during training. Intuitively we would like to sample more pixel-level representations from tail classes. Thus, we go beyond the na\"ive setting of instance discrimination in CL \cite{chen2020simple,he2020momentum,grill2020bootstrap} by decomposing images into diverse and yet consistent anatomical features, each belonging to different classes. In particular, we propose to use pseudo labeling and knowledge distillation to learn better pixel-level representations within multiple semantic classes in a training mini-batch. Considering performing pixel-level CL with medical images is impractical for both memory cost and training time, we then adopt active sampling strategies \cite{liu2021bootstrapping} such as in-batch hard negative pixels, to better discriminate the representations at a larger scale.

We further address the two other challenges: {\em consistency} and {\em diversity}. The success of the common CL theme is mainly attributed to invariant mapping \cite{hadsell2006dimensionality} and instance discrimination \cite{wu2018unsupervised,chen2020simple}. Starting from these two key aspects, we try to further improve the segmentation quality. More specifically, we suggest that {\em{consistency}} to transformation (equivariance) is an effective strategy to establish the invariances (\ie, anatomical features and shape variance) to various image transformations. Furthermore, we investigate two ways to include diversity-promoting views in sample generation. 
First, we incorporate a memory buffer to alleviate the demand for large batch size, enabling much more efficient training without inhibiting segmentation quality. Second, we leverage stronger augmentations and nearest neighbors to mine views as positive views for more semantic similar contexts.

\begin{figure*}[t]
\centering
\includegraphics[width=0.92\linewidth]{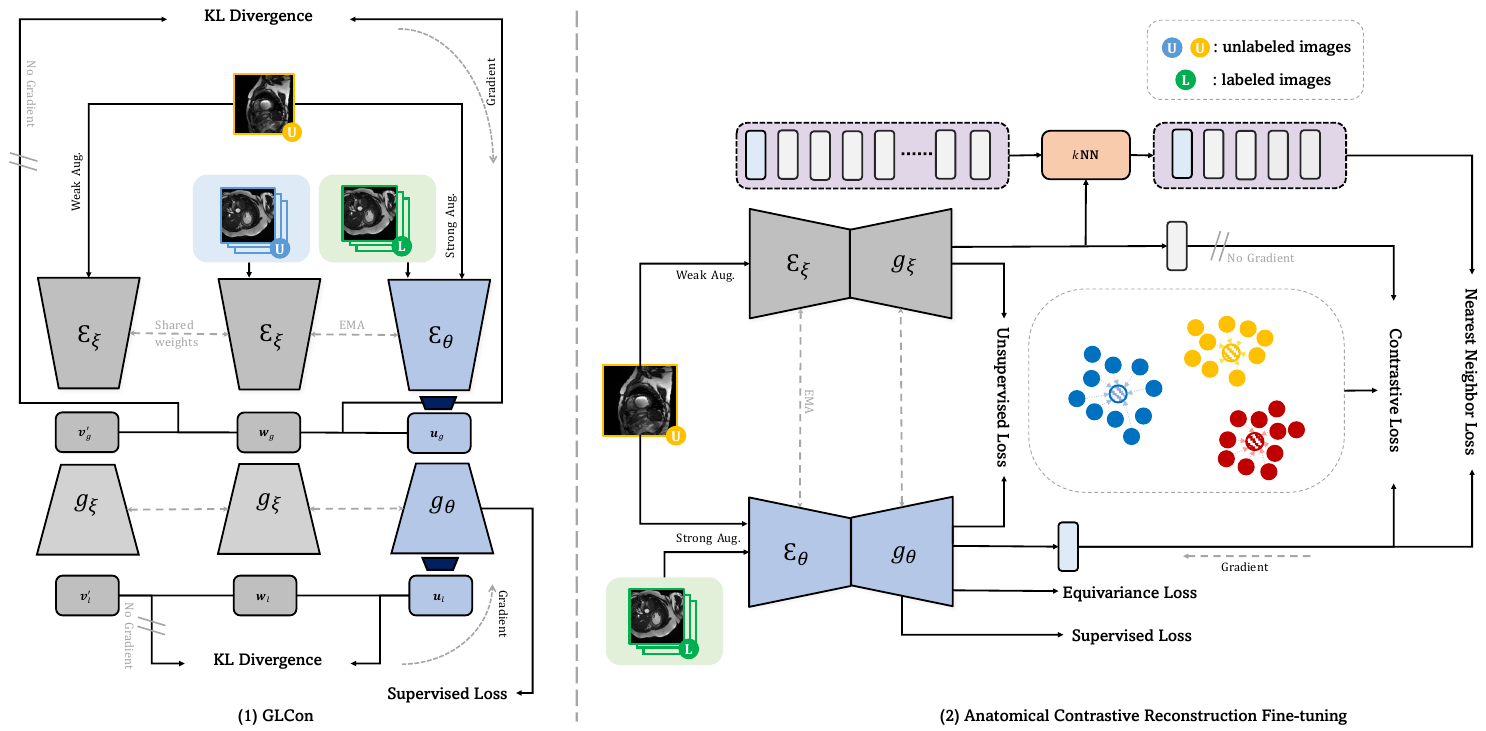}
\vspace{-5pt}
\caption{Overview of the \alg framework including two stages: (1){GLCon is design to seek both \textit{augmented} and \textit{mined} views for instance discrimination $\mathcal{L}_{\text{inst}}$ in the global and local manners. Here the global instance discrimination is designed to exploit the correlations among views within the latent feature space, which is generated by the encoders. Meanwhile, local instance discrimination aims to leverage the correlations among views - specifically, local regions of the image - within the output feature space produced by the decoder (See Section \ref{subsection:framework})}, (2) our proposed anatomical contrastive reconstruction fine-tuning {(See Section \ref{subsection:acr})}. Note that \textrm{U} and \textrm{L} denote unlabeled and labeled data.} 
\label{fig:framework}
\vspace{-10pt}
\end{figure*}

Extensive experiments are conducted on a variety of datasets and the latest CL frameworks (\ie, \moco\cite{he2020momentum}, \simclr\cite{chen2020simple}, \byol\cite{grill2020bootstrap}, and \isd\cite{tejankar2021isd}), which consistently demonstrate the effectiveness of our proposed \alg. For example, our \alg establishes the \textbf{new state-of-the-art} performance, compared to all the state-of-the-art semi-supervised approaches with different label ratios (\ie, 1\%, 5\%, 10\%). Moreover, we present a systematic evaluation for analyzing why our approach performs so well and how different factors contribute to the final performance (See Section \ref{subsection:ablation}).
Theoretically, we show the efficacy of our MONA in label efficiency (See Section~\ref{section:ablation-theory}).
Empirically, we also study whether these principles can effectively complement each CL framework (See Section~\ref{section:ablation-CL}). We hope our findings will provide useful insights on medical image segmentation to other researchers.

To summarise, our contributions are as follows: \ding{182} we carefully examine the problem of semi-supervised 2D medical image segmentation with extremely limited labels, and identify the three principles to address such challenging tasks; \ding{183} we construct a set of objectives, which significantly improves the segmentation quality, both long-tail class distribution and anatomical features; \ding{184} we both empirically and theoretically analyze several critical components of our method and conduct thorough ablation studies to validate their necessity; \ding{185} with the combination of different components, we establish state-of-the-art under SSL settings, for all the challenging three benchmarks.

\section{Related work}
\label{section:related}
\myparagraph{Medical Image Segmentation.}
Medical image segmentation aims to assign a class label to each pixel in an image, and plays a major role in real-world applications, such as assisting the radiologists for better disease diagnosis and reduced cost. With sufficient annotated training data, significant progress has been achieved with the introduction of Fully convolutional networks (\fcn) \cite{long2015fully} and \unet \cite{ronneberger2015u}. Follow-up works can be categorized into two main directions. One direction is to improve modern segmentation network design. Many CNN-based \cite{simonyan2014very,he2016deep} and Transformer-like \cite{vaswani2017attention,dosovitskiy2020image} model variants \cite{milletari2016v,chen2017deeplab,alom2018recurrent,oktay2018attention,chen2018encoder,chen2021transunet,cao2021swin,xie2021cotr,hatamizadeh2021unetr,valanarasu2021medical,you2022class} have been proposed since then. For example, some works \cite{chen2017deeplab,chen2018encoder,dai2017deformable} proposed to use dilated/atrous/deformable convolutions with larger receptive fields for more dense anatomical features. Other works \cite{chen2021transunet,cao2021swin,xie2021cotr,hatamizadeh2021unetr,valanarasu2021medical,you2022class} include Transformer blocks to capture more long-range information, achieving the impressive performance. A parallel direction is to select proper optimization strategies, by designing loss functions to learn meaningful representations \cite{lin2017focal,xue2019shape,shi2021marginal}. However, those methods assume access to a large, labeled dataset. This restrictive assumption makes it challenging to deploy in most real-world clinical practices. In contrast, our \alg is more robust as it leverages only a few labeled data and large quantities of unlabeled one in the learning stage.

\myparagraph{Semi-Supervised Learning (SSL).}
{The goal in robust SSL is to improve the medical segmentation performance by taking advantage of large amounts of unlabelled data during training. It can be roughly categorized into three groups: (1) self-training by generating unreliable pseudo-labels for performance gains, such as pseudo-label estimation \cite{lee2013pseudo,bai2017semi,fan2020inf,chen2021semi,nassar2023protocon}, model uncertainty \cite{yu2019uncertainty,graham2019mild,cao2020uncertainty}, confidence estimation \cite{blundell2015weight,gal2016dropout,kendall2017uncertainties}, and noisy student \cite{xie2020self};} (2) consistency regularization \cite{bortsova2019semi,cui2019semi,fotedar2020extreme} by integrating consistency corresponding to different transformation, such as pi-model \cite{sajjadi2016regularization}, co-training \cite{qiao2018deep,zhou2019semi}, and mean-teacher \cite{tarvainen2017mean,quan2022information,li2022mixcl,zhao2022meta,li2020transformation,reiss2021every}; (3) other training strategies such as adversarial training \cite{zhang2017deep,nie2018asdnet,zhang2018translating,zheng2019semi,li2020shape,valvano2021learning} and entropy minimization \cite{grandvalet2004semi}.  In contrast to these works, we do not explore more advanced pseudo-labelling strategy to learn spatially structured representations. In this work, we are the first to explore a novel direction for discovering distinctive and semantically consistent anatomical features without image-level or region-level labels. Further, we expect that our findings can be relevant for other medical image segmentation frameworks.

\myparagraph{Contrastive Learning.}
{CL has recently emerged as a promising paradigm for medical image segmentation via exploiting abundant unlabeled data, leading to state-of-the-art results \cite{chaitanya2020contrastive,dwibedi2021little,you2021momentum,chaitanya2021local,hu2021semi,li2022mixcl,quan2022information,you2022simcvd,zhang2023multi,lou2023min}. The high-level idea of CL is to pull closer the different augmented views of the same instance but pushes apart all the other instances away. Intuitively, differently augmented views of the same image are considered {\em positives}, while all the other images serve as {\em negatives}. 
The major difference between different CL-based frameworks lies in the augmentation strategies to obtain {\em positives} and {\em negatives}. 
\cite{dangovski2022equivariant} augments a given image with 4 different rotation degrees and trains the model to be aware of which rotation degree of each image by applying an contrastive loss. In contrast, our goal is to train a model to yield segments that adhere to anatomical, geometric and equivariance constraints in an unsupervised manner.}
A few very recent studies \cite{kang2020exploring,jiang2021improving} confirm the superiority of CL of addressing imbalance issues in image classification. Moreover, existing CL frameworks \cite{chaitanya2020contrastive,you2021momentum} mainly focus on the instance level discrimination (\ie, different augmented views of the same instance should have similar anatomical features or clustered around the class weights). 
However, we argue that not all negative samples equally matter, and the above issues have not been explored from the perspective of medical image segmentation, considering the class distributions in the medical image are perspectives diverse and always exhibit long tails \cite{galdran2021balanced,yan2022sam,roy2022does}. 
Inspired by the aforementioned, we address these two issues in medical image segmentation - two appealing perspectives that still remain under-explored.
\section{Mine yOur owN Anatomy (\alg)}
\label{section:method}

\myparagraph{Overview.}
\alg consists of two parts: {a global-local contrastive pre-training part named GLCon (Section~\ref{subsection:framework}) and a fine-tuning part named Anatomical Contrastive Reconstruction (Section~\ref{subsection:acr}).}
We illustrate our contrastive learning framework (See Figure~\ref{fig:framework}), which includes (1) relational semi-supervised pre-training, and (2) anatomical contrastive reconstruction fine-tuning.

\subsection{{GLCon}}
\label{subsection:framework}

Our pre-training stage is built upon \isd \cite{tejankar2021isd} - a competitive framework for image classification. 
The {\em{main differences}} between {\isd} and {the pre-training part of {\alg} (\ie~GLCon)} are: {GLCon is more tailored to medical image segmentation, \ie, considering the dense nature of this problem both in global and local manner, and can generalize well to those long-tail scenarios.} 
Also, our principles are expected to apply to other CL framework ((\ie, \moco\cite{he2020momentum}, \simclr\cite{chen2020simple}, \byol\cite{grill2020bootstrap}). More detailed empirical and theoretical analysis can be found in Section~\ref{section:ablation-CL} and Section~\ref{section:ablation-theory}.

\myparagraph{Pre-training preliminary.}
Let $(X, Y)$ be our dataset, including training images $\x \in X$ and their corresponding $\mathcal{C}$-class segmentation labels  $\y\in Y$, where $X$ is composed of $N$ labeled and $M$ unlabeled slices. Note that, for brevity, $\y$ can be either sampled from $Y$ or pseudo-labels. The {\em{student}} and {\em{teacher}} networks $\mathcal{F}$, parameterized by weights $\theta$ and  $\xi$, each consist of a encoder $\mathcal{E}$ and a decoder $\mathcal{D}$ (\ie, \ttt{UNet} \cite{ronneberger2015u}). Concretely, given a sample $\s$ from our unlabeled dataset, we have two ways to generate views: (1) we formulate {\em{augmented}} views (\ie, $\x,\x'$) through two different augmentation chains; and (2) we create $d$ {\em{mined}} views (\ie, $\x_{r,i}$) by randomly selecting from the unlabeled dataset followed by additional augmentation.\footnote{Note that the subscript $i$ is omitted for simplicity in following contexts.}
We then fed the {\em{augmented}} views to both $\mathcal{F}_\theta$ and  $\mathcal{F}_\xi$, and the {\em{mined}} views to $\mathcal{F}_\xi$. Similar to \cite{chaitanya2020contrastive}, we adopt the global and local instance discrimination strategies in the latent and output feature spaces.\footnote{Here we omit details of local instance discrimination strategy for simplicity because the global and local instance discrimination experimental setups are similar.}
Specifically, the encoders generate global features $\z_{g} = \E_\theta( \x)$, $\z'_{g} = \E_\xi( \x')$, and $\z_{r,g} = \E_\xi( \x_r)$, which are then fed into the nonlinear projection heads to obtain $\vv_{g} = h_\theta(\z_{g})$, $\vv'_{g} = h_{\xi} (\z'_{g})$, and $\w_{g} = h_{\xi}(\z_{r,g})$. The {\em{augmented}} embeddings from the {\em{student}} network are further projected into secondary space, \ie, ${\uu_{g}} = h'_\theta(\vv_{g})$. We calculate similarities across {\em{mined}} views and {\em{augmented}} views from the {\em{student}} and {\em{teacher}} in both global and local manners. Then a \texttt{softmax} function is applied to process the calculated similarities, which models the relationship distributions:
\begin{equation}
    \begin{aligned}
    \s_\theta &= \text{log} \frac{\text{exp}\big(\text{sim}\big(\uu, \w\big)/\tau_\theta\big)}{\sum_{j=1}^k \text{exp}\big(\text{sim}\big(\uu, \w_{j}\big)/\tau_\theta\big)},\quad \\
    \s_\xi &= \text{log} \frac{\text{exp}\big(\text{sim}\big(\vv', \w\big)/\tau_\xi\big)}{\sum_{j=1}^k \text{exp}\big(\text{sim}\big(\vv', \w_{j}\big)/\tau_\xi\big)},
\end{aligned}
\end{equation}
where $\tau_\theta$ and $\tau_\xi$ are different temperature parameters, {$k$ denotes the number of mined views} and $\text{sim}(\cdot,\cdot)$ denotes cosine similarity. The unsupervised instance discrimination loss (\ie, Kullback-Leibler divergence $\mathcal{KL}$) can be defined as:
\begin{equation}
\mathcal{L}_{\text{inst}} = \mathcal{KL}(\s_\theta || \s_\xi).
\label{equation:pcl}
\end{equation}
The parameters $\xi$ of $\mathcal{F}_{\xi}$ is updated as: $\xi = t \xi + (1-t) \theta$ with $t=0.99$ as a momentum hyperparameter. 
In our pre-training stage, the total loss is the sum of global and local instance discrimination loss $\mathcal{L}_\text{inst}$ (on pseudo-labels), and supervised segmentation loss $\mathcal{L}_\text{sup}$ (\ie, equal combination of dice loss and cross-entropy loss on ground-truth labels): $\mathcal{L^{\text{global}}_\text{inst}} + \mathcal{L^{\text{local}}_\text{inst}} + \mathcal{L}_\text{sup}$. {Therefore, the GLCon loss encourages that the model acquires both global and local features.}
% $H(\mathbf{p}^1,\mathbf{p}^2)$

% {\bf Principles.}

% Section~\ref{section:ablation-theory}.

% address the inequality of negative samples in CL, and the multi-class imbalanced medical image segmentation; consistency ensures the feature invariances; and diversity further encourages to discover more anatomical features in different images.
\begin{figure}[t]
    \centering
    % \vspace{-1em}
    \includegraphics[width=0.65\linewidth]{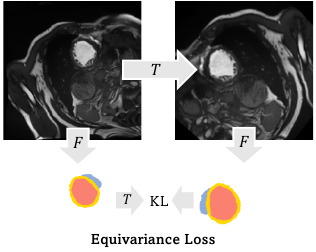}
    \vspace{-10pt}
    \caption{Illustration of the contrastive loss. Intuitively, we actively sample a set of pixel-level anchor representations, pulling them closer to the class-averaged mean of representations within this class ({\ti{positive keys}}), and pushing away from representations from other classes ({\ti{negative keys}}).}
    \label{fig:contrastive_loss}
    \vspace{-5pt}
\end{figure}

% , parallel to the segmentation head, \cite{liu2021bootstrapping,wang2022semi}
\subsection{Anatomical Contrastive Reconstruction}
\label{subsection:acr}
\myparagraph{Principles.}
{The key idea of the fine-tuning part is to seek diverse yet semantically consistent views whose anatomical features are {\em{homogeneous}} within the {\em{same class type}}, while {\em{distinctive}} for {\textit{different class types}}.
As shown in Figure~\ref{fig:framework}, the principles behind \alg (the anatomical contrastive reconstruction stage) aim to ensure tailness, consistency, and diversity. 
Concretely, tailness is for actively sampling more tail class hard pixels; consistency ensures the feature invariances; and diversity further encourages to discover more anatomical features in different images. More theoretical analysis is in Section~\ref{section:ablation-theory}.
}

\myparagraph{Tailness.}
Motivated by the observations (Figure~\ref{fig:distribution}), our primary cue is that medical images naturally exhibit an imbalanced or long-tailed class distribution, wherein many class labels are associated with only a few pixels. To generalize well on such {\em{imbalanced}} setting, we propose to use {\em{anatomical contrastive formulation}} (\textbf{ACF}) (See Figure~\ref{fig:contrastive_loss}).

Here we additionally attach the representation heads to fuse the multi-scale features with the feature pyramid network (\fpn) \cite{lin2017feature} structure and generate the $m$-dimensional representations with consecutive convolutional layers. The high-level idea is that the features should be very {\em{similar}} among the same class type, while very {\em{dissimilar}} across different class types. Particularly for long-tail medical data, a na\"ive application of this idea would require substantially computational resources proportional to the square of the number of pixels within the dataset, and naturally overemphasize the anatomy-rich head classes and leaves the tail classes under-learned in learning invariances, both of which suffer performance drops.

To this end, we address this issue by actively sampling a set of pixel-level anchor representations $\rr_q\in \mathcal{R}_q^c$ ({\em{queries}}), pulling them closer to the class-averaged mean of representations $\rr_k^{c,+}$ within this class $c$ ({\em{positive keys}}), and pushing away from representations $\rr_k^{-}\in \mathcal{R}_k^c$ from other classes ({\em{negative keys}}). Formally, the contrastive loss is defined as:
\begin{equation}
  \label{loss:contrastive}
  \begin{aligned}
         \mathcal{L}_\text{contrast} &= \sum_{c\in \mathcal{C}} \sum_{\rr_q \sim \mathcal{R}^c_q} \\ 
         & -\log \frac{\exp(\rr_q \cdot \rr_k^{c, +} / \tau)}{\exp(\rr_q \cdot \rr_k^{c, +}/ \tau) + \sum_{\rr_k^{-}\sim \mathcal{R}^c_k} \exp(\rr_q \cdot \rr_k^{-}/ \tau)},
  \end{aligned}
\end{equation}
where $\mathcal{C}$ denotes a set of all available classes for each mini-batch, and $\tau$ is a temperature hyperparameter. 
Suppose $\mathcal{A}$ is a collection including all pixel coordinates within $\x$, these representations are:
\begin{equation}
  \begin{aligned}
  \mathcal{R}_q^c &= \bigcup_{[m, n]\in \mathcal{A}}\!\!\mathbbm{1}(\y_{[m,n]}\!=\!c)\, \rr_{[m,n]},\, \\ 
  \mathcal{R}_k^c &= \bigcup_{[m, n]\in \mathcal{A}}\!\!\mathbbm{1}(\y_{[m,n]}\!\neq\!c)\, \rr_{[m,n]},\, \\
  \rr_k^{c, +} &= \frac{1}{| \mathcal{R}_q^c |}\sum_{\rr_q \in \mathcal{R}_q^c} \rr_q\,.
  \end{aligned}
\end{equation}
Note that in Eq.~\ref{loss:contrastive}, we are using the negative pairs $\rr_k^{-}$ to estimate the centers of opposite classes. 
The {class average representation} $\rr_k^{c,+}$ is averaged over all instances from the target class $c$.
We also note that CL might benefit more, where the instance discrimination task is achieved by incorporating more positive and negative pairs. 
However, naively unrolling CL to this setting is impractical since it requires extra memory overheads that grow proportionally with the amount of instance discrimination tasks. To this end, we adopt a random set (\ie, the mini-batch) of other images. 
Intuitively, we would like to maximize the anatomical similarity between all the representations from the query class, and analogously minimize all other class representations. 
{In order to compare the pairs of instances} between opposite and target classes, we then create a graph $\mathcal{G}$ to compute the pair-wise class relationship:
$\mathcal{G}[p, q] = \left(\rr_k^{p, +} \cdot \rr_k^{q, +}\right), \forall p,q \in \mathcal{C}, \text{ and } p\neq q,$
where $\mathcal{G}\in \mathbb{R}^{|\mathcal{C}| \times |\mathcal{C}|}$. 
Here finding the accurate decision boundary can be formulated mathematically by normalizing the pair-wise relationships among all negative class representations via the \texttt{softmax} operator. 
{To be specific, in Eq.~\ref{loss:contrastive}, we use adaptive sampling for the negative keys $\rr_k^{-}$ from the opposite classes. To do so, we use \texttt{softmax} to yield a distribution $\exp(G[c, v])/ \sum_{n\in \mathcal{C}, n\neq c} \exp(G[c, n])$, with which we adaptively sample negative keys from class $v$, for $v\neq c$.} 
To address the challenge in \textit{imbalanced} medical image data, we define the pseudo-label (\ie, easy and hard queries) based on a defined threshold as follows:
\begin{equation}
  \label{eq:easyhard}
  \begin{split}
  \mathcal{R}_q^{c,\, \text{easy}} &= \bigcup_{\rr_q \in \mathcal{R}^c_q} \mathbbm{1}(\hat{\y}_q > \delta_\theta)\rr_q,\quad \\
  \mathcal{R}_q^{c,\, \text{hard}} &= \bigcup_{\rr_q \in \mathcal{R}^c_q} \mathbbm{1}(\hat{\y}_q \leq \delta_\theta)\rr_q,      
  \end{split}
\end{equation}
where $\hat{\y}_q$ is the $c^\text{th}$-class pseudo-label corresponding to $\rr_q$, and $\delta_\theta$ is the user-defined threshold. For further improvement in long-tail scenarios, we construct a class-aware memory bank \cite{he2020momentum} to store a fixed number of negative samples per class $c$.

\begin{figure}[t]
    \centering
    \includegraphics[width=0.65\linewidth]{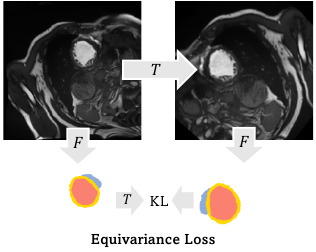}
    \vspace{-5pt}
    \caption{Illustration of the equivariance loss.}
    \vspace{-5pt}
    \label{fig:eqv_loss}
\end{figure}

\myparagraph{Consistency.}
The proposed ACF is designed to address {\em{imbalanced}} issues, but {\em{anatomical consistency}} remains to be weak in the long-tail medical image setting since medical segmentation should be robust to different tissue types which show different anatomical variations. 
{Our goal is to train a model to yield segments that adhere to anatomical, geometric and equivariance constraints in an unsupervised manner.}
As shown in Figure \ref{fig:eqv_loss}, we hence construct a random image transformation $\mathcal{T}$ and define the equivariance loss on both labeled and unlabeled data by measuring the feature consistency distance between each original segmentation map and the segmentation map generated from the transformed image:
\begin{equation}
  \label{eq:eqv}
  \begin{aligned}
    \mathcal{L}_\text{eqv}(\x, \mathcal{T}(\x)) &= \sum_{\x\in X} \mathcal{KL}\left( \mathcal{T}(\mathcal{F}_{\theta}(\x)), \mathcal{F}_{\theta}(\mathcal{T}(\x)) \right) \\
    &+ \mathcal{KL}\left( \mathcal{F}_{\theta}(\mathcal{T}(\x)), \mathcal{T}(\mathcal{F}_{\theta}(\x)) \right)\,.
  \end{aligned}
\end{equation}
Here we define $\mathcal{T}$ on both the input image $\x$ and $\mathcal{F}_{\theta}(\x)$, via the random transformations (\ie, affine, intensity, and photo-metric augmentations), since the model should learn to be robust and invariant to these transformations.

\myparagraph{Diversity.}
Oversampling too many images from the random set would create extra memory overhead, and more importantly, our finding also uncovers that a large number of random images might not necessarily help impose additional invariances between neighboring samples since redundant images might introduce additional noise during training (see Section \ref{section:ablation-acr}). 
{To counteract this, we utilize the nearest neighbor strategy, ensuring the model benefits from its previous outputs  without overly concentrating on extraneous features.
Thus, we formulate our insight as an auxiliary loss that regularizes the representations - keeping the anatomical contrastive reconstruction task as the main force. In practice, given a batch of unlabeled images, we use both the teacher and student models to obtain $\vv'_{g}$ and ${\uu_{g}}$, which are then normalized using the $l_{2}$ norm. $\vv'_{g}$ is fed to the first-in-first-out (FIFO) memory bank \cite{he2020momentum}, where it search for $K$-nearest neighbors from the memory bank. Then we use the nearest neighbor loss $\mathcal{L}_\text{nn}$ to maximize cosine similarity, thereby exploiting the inter-instance relationship. Specifically, we minimize the distance between ${\uu_{g}}$ and the $K$-nearest neighbors, with the distance defined as negative cosine similarity, thereby maximizing cosine similarity.}

\myparagraph{Setup.}
The total loss $\mathcal{L}_\text{total}$ is the sum of contrastive loss $\mathcal{L}_\text{contrast}$ (on both ground-truth labels and pseudo-labels), equivariance loss $\mathcal{L}_\text{eqv}$ (on both ground-truth labels and pseudo-labels), nearest neighbors loss $\mathcal{L}_\text{nn}$ (on both ground-truth labels and pseudo-labels), unsupervised cross-entropy loss $\mathcal{L}_\text{unsup}$ (on pseudo-labels) and supervised segmentation loss  $\mathcal{L}_\text{sup}$ (on ground-truth labels): $\mathcal{L}_\text{sup} + \lambda_{1}\mathcal{L_\text{contrast}} + \lambda_{2}\mathcal{L}_\text{eqv} + \lambda_3\mathcal{L}_\text{unsup} + \lambda_4\mathcal{L}_\text{nn}$. 
We theoretically analyze the effectiveness of our MONA in the very limited label setting (See Section~\ref{section:ablation-theory}). We also empirically conduct ablations on  different hyperparameters (See Section~\ref{section:ablation-acr}).

\section{Experiments}
\label{section:experiments}

\begin{table*}[t]
	\begin{center}
	\caption{Comparison of segmentation performance (DSC{[}\%{]}/ASD{[}mm{]}) on ACDC and LiTS under three labeled ratio settings (1\%, 5\%, 10\%). The best results are indicated in \tf{bold}.}
	\vspace{-5pt}
	\label{table:acdc_lits_main}
    \begin{adjustbox}{width=0.9\linewidth}
	\begin{tabular}{ccccccccccccc}
		\toprule
		& \multicolumn{6}{c}{\textbf{ACDC}} & \multicolumn{6}{c}{\textbf{LiTS}}\\
		\cmidrule(r){2-7} \cmidrule(r){8-13}
		& \multicolumn{2}{c}{1\% Labeled} & \multicolumn{2}{c}{5\% Labeled} & \multicolumn{2}{c}{10\% Labeled} & \multicolumn{2}{c}{1\% Labeled} & \multicolumn{2}{c}{5\% Labeled} & \multicolumn{2}{c}{10\% Labeled}\\
        \cmidrule(r){2-3} \cmidrule(r){4-5} \cmidrule(r){6-7} \cmidrule(r){8-9} \cmidrule(r){10-11} \cmidrule(r){12-13}
		{Method}
		            & {DSC~$\uparrow$}
		            & {ASD~$\downarrow$}
		            & {DSC~$\uparrow$}
		            & {ASD~$\downarrow$}  
		            & {DSC~$\uparrow$}
		            & {ASD~$\downarrow$} 
		            & {DSC~$\uparrow$}
		            & {ASD~$\downarrow$}
		            & {DSC~$\uparrow$}
		            & {ASD~$\downarrow$}  
		            & {DSC~$\uparrow$}
		            & {ASD~$\downarrow$} 
		            \\ \midrule
	    \unetF \cite{ronneberger2015u}
		            & {91.5} 
		            & {0.996}
		            & {91.5} 
		            & {0.996}
		            & {91.5} 
		            & {0.996}
		            & {68.5}
		            & {17.8}
		            & {68.5}
		            & {17.8}
		            & {68.5}
		            & {17.8}
                    \\
		\unetL        
                    & {14.5}
                    & {19.3}
                    & {51.7}
		            & {13.1}
                    & {79.5}
                    & {2.73}
                    & {57.0}
                    & {34.6}
                    & {60.4}
		            & {30.4}
                    & {61.6}
                    & {28.3}
                    \\\midrule 
        \emm \cite{vu2019advent} 
                    & {21.1}
                    & {21.4}
                    & {59.8}
                    & {5.64}
                    & {75.7}
                    & {2.73}
                    & {56.6}
                    & {38.4}
                    & {61.2}
		            & {33.3}
                    & {62.9}
                    & {38.5}
                    \\ 
	    \cct \cite{ouali2020semi} 
                    & {30.9}
                    & {28.2}
                    & {59.1}
                    & {10.1}
                    & {75.9}
                    & {3.60}
                    & {52.4}
                    & {52.3}
                    & {60.6}
		            & {48.7}
                    & {63.8}
                    & {31.2}
                    \\ 
	    \dan \cite{zhang2017deep} 
                    & {34.7}
                    & {25.7}
                    & {56.4}
                    & {15.1}
                    & {76.5}
                    & {3.01}
                    & {57.2}
                    & {27.1}
                    & {62.3}
		            & {25.8}
                    & {63.2}
                    & {30.7}
                    \\
	    \urpc \cite{luo2021efficient} 
                    & {32.2}
                    & {26.9}
                    & {58.9}
                    & {8.14}
                    & {73.2}
                    & {2.68}
                    & {55.5}
                    & {34.6}
                    & {62.4}
		            & {37.8}
                    & {63.0}
                    & {43.1}
                    \\ 
	    \dct \cite{qiao2018deep} 
                    & {36.0}
                    & {24.2}
                    & {58.5}
                    & {10.8}
                    & {78.1}
                    & {2.64}
                    & {57.6}
                    & {38.5}
                    & {60.8}
		            & {34.4}
                    & {61.9}
                    & {31.7}
                    \\ 
	   { \simcvd \cite{you2022simcvd} }
                    & {{32.1}}
                    & {{20.3}}
                    & {{76.1}}
                    & {{4.14}}
                    & {{79.2}}
                    & {{2.21}}
                    & {{56.2}}
                    & {{32.7}}
                    & {{60.5}}
                    & {{23.6}}
                    & {{61.3}}
                    & {{26.0}}
                    \\ 
	    {\mms \cite{lou2023min} }
                    & {{32.5}}
                    & {{13.6}}
                    & {{77.6}}
                    & {{3.61}}
                    & {{79.4}}
                    & {{1.74}}
                    & {{56.9}}
                    & {{45.6}}
                    & {{61.6}}
                    & {{55.4}}
                    & {{62.5}}
                    & {{46.9}}
                    \\ 
	    \ict \cite{verma2019interpolation} 
                    & {35.8}
                    & {21.3}
                    & {59.0}
                    & {4.59}
                    & {75.1}
                    & {0.898}
                    & {58.3}
                    & {32.2}
                    & {60.1}
		            & {39.1}
                    & {62.5}
                    & {32.4}
                    \\ 
	    \mt \cite{tarvainen2017mean} 
                    & {36.8}
                    & {19.6}
                    & {58.3}
                    & {11.2}
                    & {80.1}
                    & {2.33}
                    & {56.7}
                    & {34.3}
                    & {61.9}
		            & {40.0}
                    & {63.3}
                    & {26.2}
                    \\ 
	    \uamt \cite{yu2019uncertainty} 
                    & {35.2}
                    & {24.3}
                    & {61.0}
                    & {7.03}
                    & {77.6}
                    & {3.15}
                    & {57.8}
                    & {41.9}
                    & {61.0}
		            & {47.0}
                    & {62.3}
                    & {26.0}
                    \\ 
	    \cps \cite{chen2021semi} 
                    & {37.1}
                    & {30.0}
                    & {61.0}
                    & {2.92}
                    & {78.8}
                    & {3.41}
                    & {57.7}
                    & {39.6}
                    & {62.1}
		            & {36.0}
                    & {64.0}
                    & {23.6}
                    \\ 
	    \gcl \cite{chaitanya2020contrastive} 
                    & {59.7}
                    & {14.3}
                    & {70.6}
                    & {2.24}
                    & {87.0}
                    & \tf{0.751}
                    & {59.3}
                    & {29.5}
                    & {63.3}
		            & {20.1}
                    & {65.0}
                    & {37.2}
                    \\ 
	    \scs \cite{hu2021semi} 
                    & {59.4}
                    & {12.7}
                    & {73.6}
                    & {5.37}
                    & {84.2}
                    & {2.01}
                    & {57.8}
                    & {39.6}
                    & {61.5}
		            & {28.8}
                    & {64.6}
                    & {33.9}
                    \\ 
	    \plc \cite{chaitanya2021local} 
                    & {58.8}
                    & {15.1}
                    & {70.6}
                    & {2.67}
                    & {87.3}
                    & {1.34}
                    & {56.6}
                    & {41.6}
                    & {62.7}
		            & {26.1}
                    & {68.2}
                    & \tf{16.9}
                    \\ 
        \gr \cb \alg (ours)
                    & \tf{82.6}
                    & \tf{2.03}
                    & \tf{88.8}
                    & \tf{0.622}
                    & \tf{90.7}
                    & {0.864}
                    & \tf{64.1}
                    & \tf{20.9}
                    & \tf{67.3}
		            & \tf{16.4}
                    & \tf{69.3}
                    & {18.0}
                    \\ 
		              \bottomrule
	\end{tabular}
    \end{adjustbox}
    \end{center}
    \vspace{-10pt}
\end{table*}

\begin{figure*}[ht]
\centering
\includegraphics[width=0.9\linewidth]{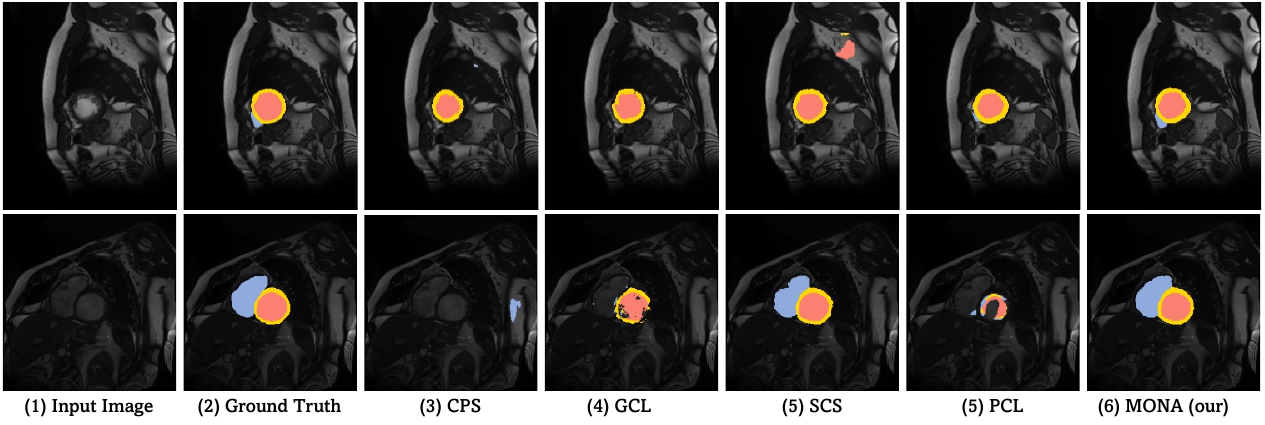}
\vspace{-10pt}
\caption{Visualization of segmentation results on ACDC with 5\% label ratio. As is shown, \alg consistently yields more accurate predictions and better boundary adherence compared to all other SSL methods. Different anatomical classes are shown in different colors (RV: \includegraphics[scale=0.38,valign=c]{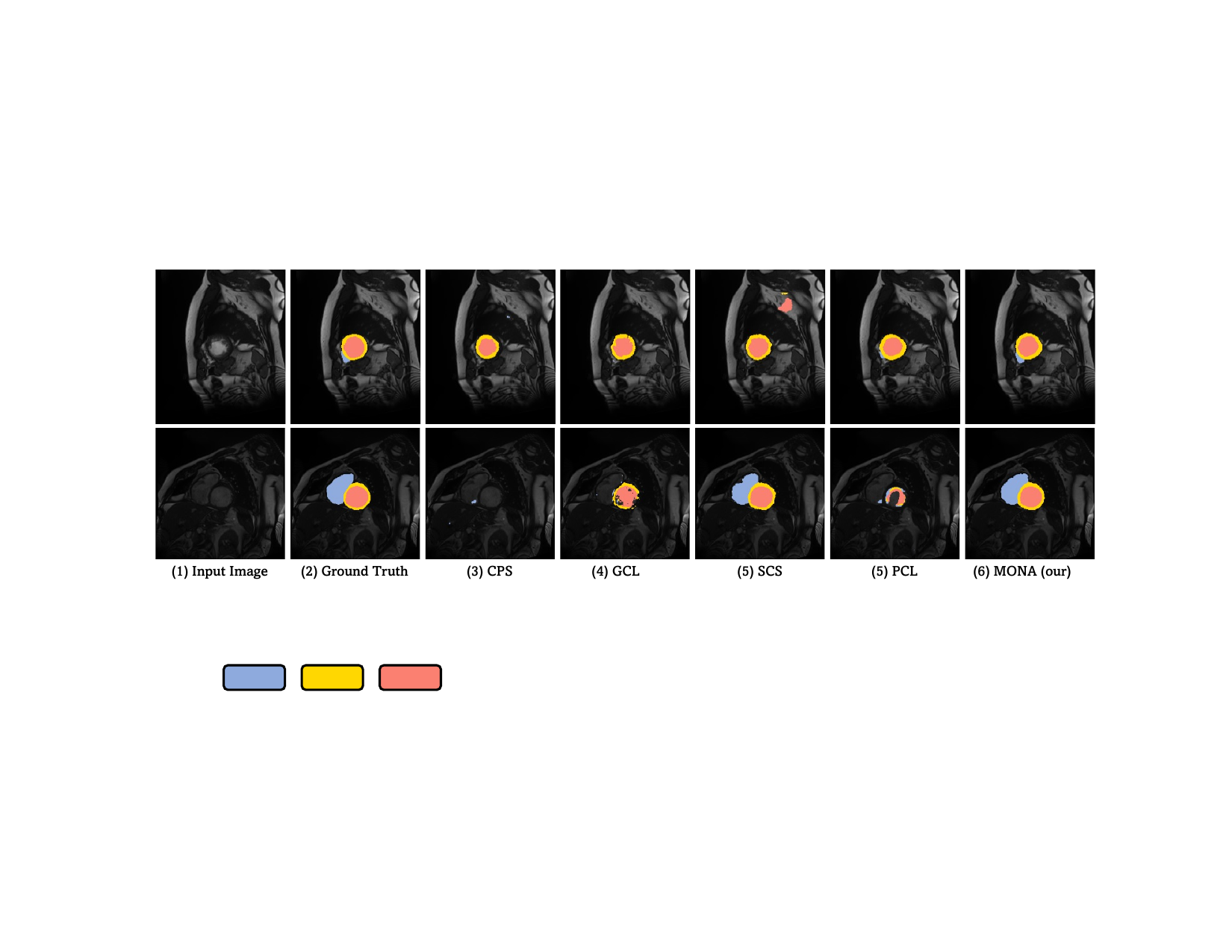}; Myo: \includegraphics[scale=0.4,valign=c]{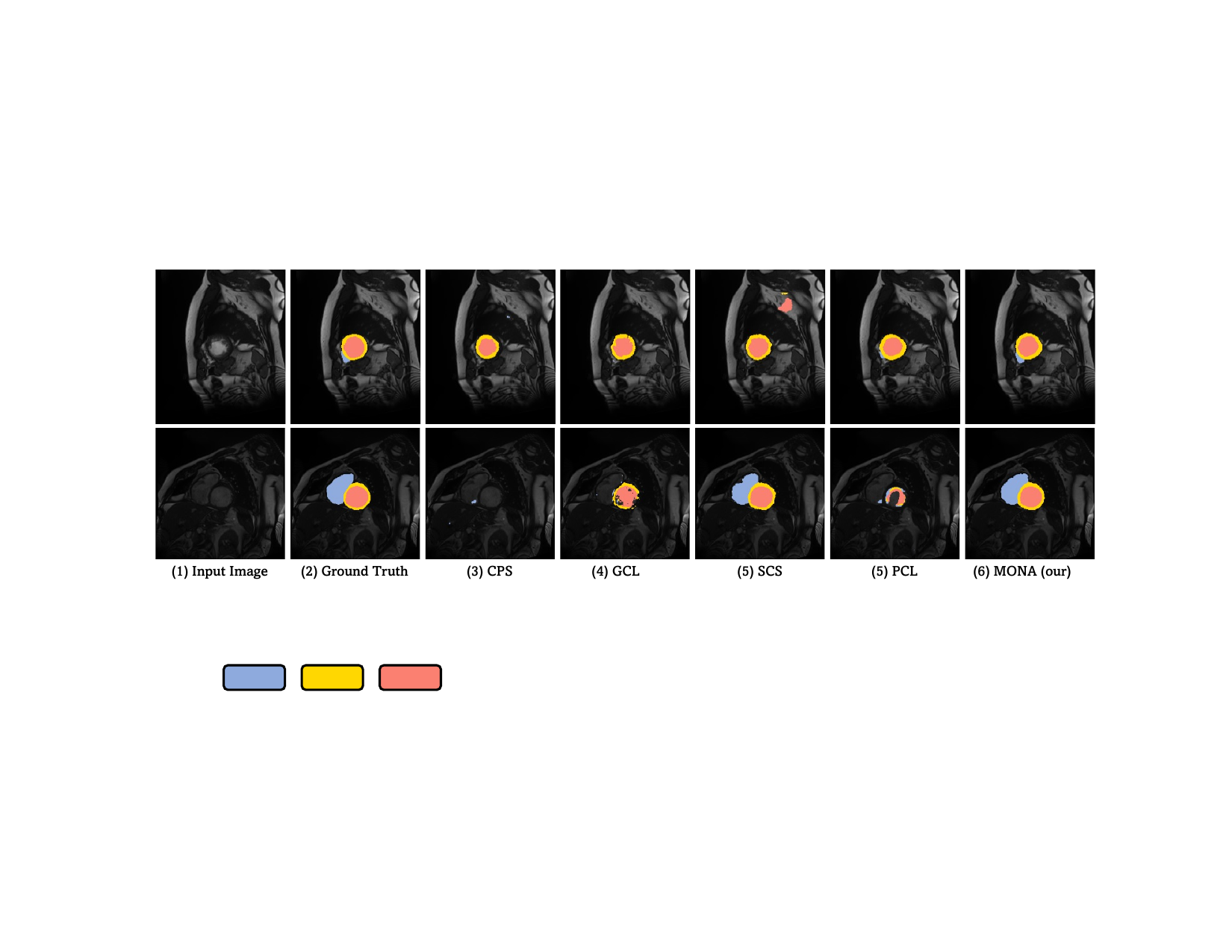}; LV: \includegraphics[scale=0.4,valign=c]{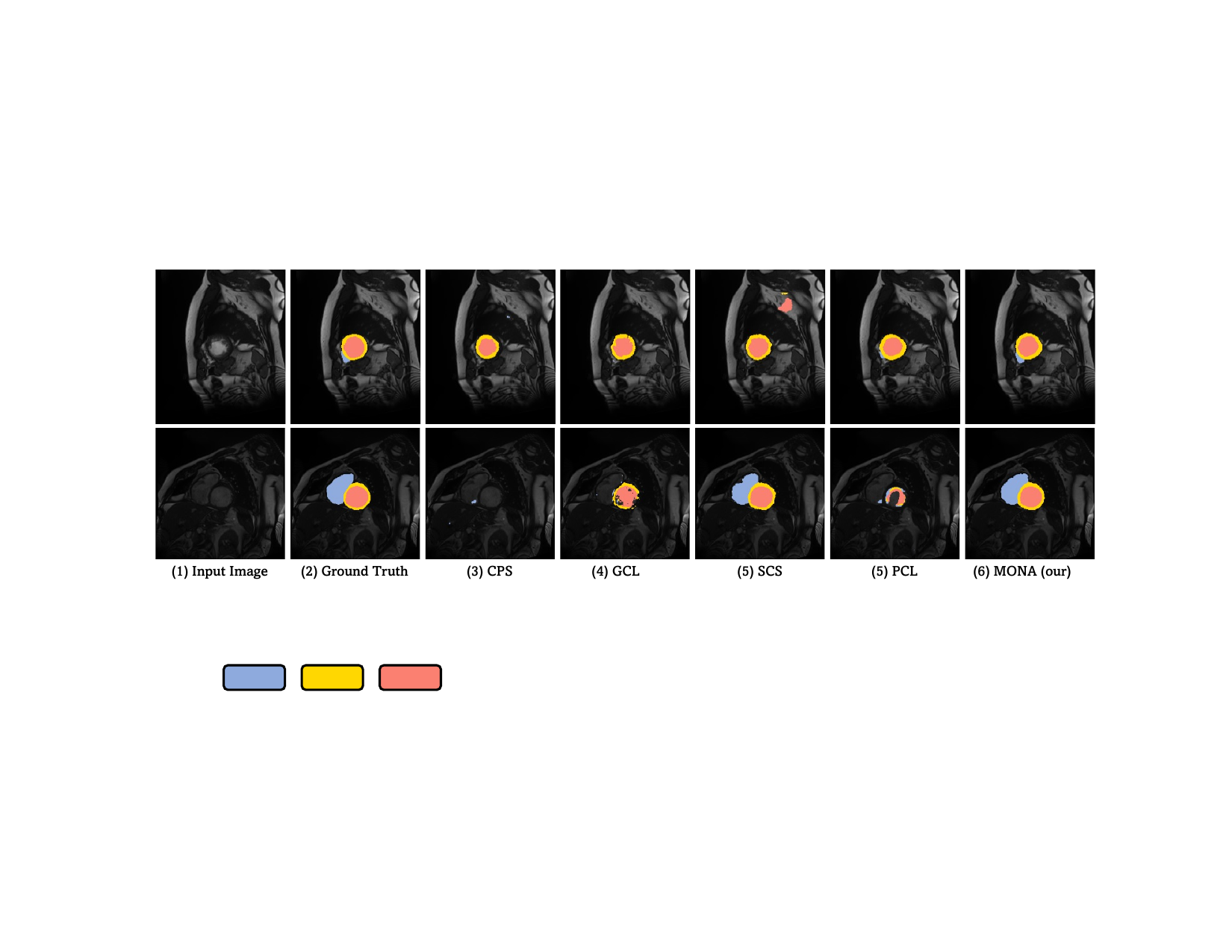}).} 
\label{fig:vis_acdc}
\vspace{-5pt}
\end{figure*}

\begin{figure}[t]
    \centering
    \includegraphics[width=0.7\linewidth]{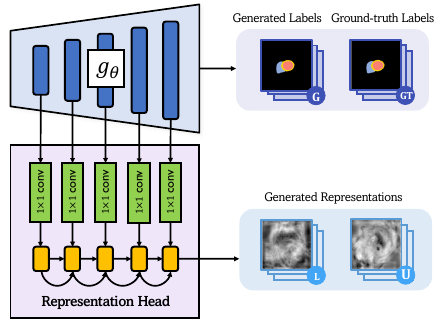}
\vspace{-10pt}
    \caption{Overview of the representation head architecture.}
    \label{fig:rep_head}
    \vspace{-5pt}
\end{figure}

\begin{figure*}[t]
\centering
\includegraphics[width=0.93\linewidth]{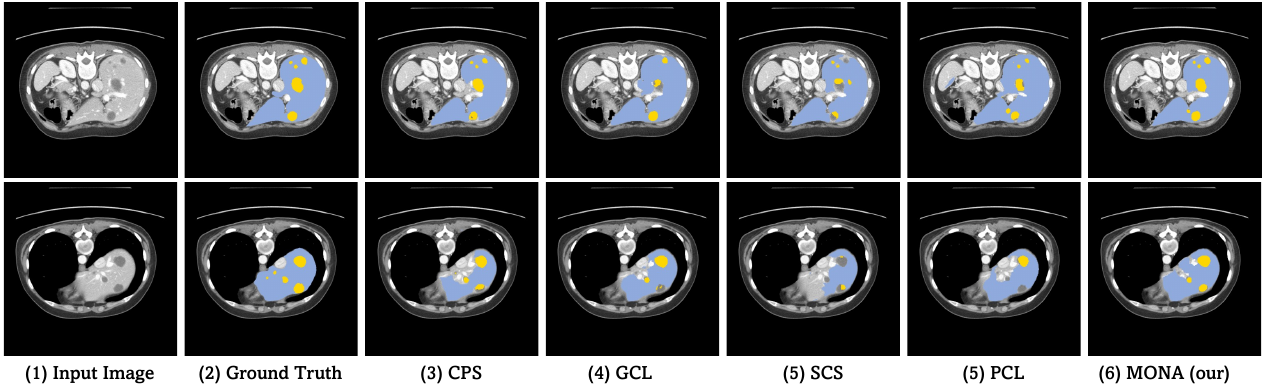}
\vspace{-10pt}
\caption{Visualization of segmentation results on LiTS with 5\% labeled ratio. As is shown, \alg  consistently produces sharp and accurate object boundaries compared to all other SSL methods. Different anatomical classes are shown in different colors (Liver: \includegraphics[scale=0.38,valign=c]{figures/protype_blue.pdf}; Tumor: \includegraphics[scale=0.4,valign=c]{figures/protype_yellow.pdf}).} 
\label{fig:vis_lits}
\vspace{-5pt}
\end{figure*}

In this section, we evaluate our proposed {\alg} on three popular medical image segmentation datasets under varying labeled ratio settings: the ACDC dataset \cite{bernard2018deep}, the LiTS dataset \cite{bilic2019liver}, and the MMWHS dataset \cite{zhuang2016multi}.

\subsection{Datasets}
{\bf The ACDC dataset} was hosted in MICCAI 2017 ACDC challenge \cite{bernard2018deep}, which includes 200 3D cardiac cine MRI scans with expert annotations for three classes (\ie, left ventricle (LV), myocardium (Myo), and right ventricle (RV)). We use 120, 40 and 40 scans for training, validation, and testing\footnote{\url{https://github.com/HiLab-git/SSL4MIS/tree/master/data/ACDC}}. Note that 1\%, 5\%, and 10\% label ratios denote the ratio of patients. For pre-processing, we adopt the similar setting in \cite{chaitanya2020contrastive} by normalizing the intensity of each 3D scan (\ie, using min-max normalization) into $[0,1]$, and re-sampling all 2D scans and the corresponding segmentation maps into a fixed spatial resolution of $256\times 256$ pixels.

{\bf The LiTS dataset} was hosted in MICCAI 2017 Liver Tumor Segmentation Challenge \cite{bilic2019liver}, which includes 131 contrast-enhanced 3D abdominal CT volumes with expert annotations for two classes (\ie, liver and tumor). Note that 1\%, 5\%, and 10\% label ratios denote the ratio of patients. We use 100 and 31 scans for training, and testing with random order. The splitting details are in the supplementary material. For pre-processing, we adopt the similar setting in \cite{li2018h} by truncating the intensity of each 3D scan into $[-200,250]$ HU for removing irrelevant and redundant details, normalizing each 3D scan into $[0,1]$, and re-sampling all 2D scans and the corresponding segmentation maps into a fixed spatial resolution of $256\times 256$ pixels.

{\bf The MMWHS dataset} was hosted in MICCAI 2017 challenge \cite{zhuang2016multi}, which includes 20 3D cardiac MRI scans with expert annotations for seven classes: left ventricle (LV), left atrium (LA), right ventricle (RV), right atrium (RA), myocardium (Myo), ascending aorta (AAo), and pulmonary artery (PA). Note that 1\%, 5\%, and 10\% label ratios denote the ratio of patients. We use 15 and 5 scans for training and testing with random order. The splitting details are in the supplementary material. For pre-processing, we normalize the intensity of each 3D scan (\ie, using min-max normalization) into $[0,1]$, and re-sampling all 2D scans and the corresponding segmentation maps into a fixed spatial resolution of $256\times 256$ pixels.

Moreover, to further validate our approach’s unsupervised imbalance handling ability, we consider a more realistic and more challenging scenario, wherein the models would only have access to the extremely limited labeled data (\ie, 1\% labeled ratio) and large quantities of unlabeled one in training. For all experiments, we follow the same training and testing protocol. Note that 1\%, 5\%, and 10\% label ratios denote the ratio of patients. For ACDC, we adopt the fixed data split \cite{wu2022exploring}. For LiTS and MMWHS, we adopt the random data split with respect to patient.

\subsection{Implementation Details.}
\label{section:implementation}
We implement all the evaluated models using PyTorch library \cite{paszke2019pytorch}. All the models are trained using Stochastic Gradient Descent (SGD) (\ie, initial learning rate = $0.01$, momentum = $0.9$, weight decay = $0.0001$) with batch size of $6$, and the initial learning rate is divided by $10$ every $2500$ iterations. All of our experiments are conducted on NVIDIA GeForce RTX 3090 GPUs. We first train our model with 100 epochs during the pre-training, and then retrain the model for 200 epochs during the fine-tuning. We set the temperature $\tau_\xi$, $\tau_\theta$, $\tau$ as 0.01, 0.1, 0.5. The size of the memory bank is 36. During the pre-training, we follow the settings of \isd, including global projection head setting, and predictors with the $512$-dimensional output embedding, and adopt the setting of local projection head in \cite{hu2021semi}. More specifically, given the predicted logits $\hat{\y} \in \mathbb{R}^{\mathcal{C}\times \mathcal{H}\times \mathcal{W}}$, we create 36 different views (\ie, random crops at the same location) of $\hat{\y}$ and $\hat{\y}'$ with the fixed size $64\times 64$, and then project all pixels into $512$-dimensional output embedding space, and the output feature dimension of $h'_\theta$ is also $512$. An illustration of our representation head is presented in Figure \ref{fig:rep_head}. We then actively sample 256 query embeddings and 512 key embeddings for each mini-batch, and the confidence threshold $\delta_\theta$ is set to $0.97$. When fine-tuning we use an equally sized pool of candidates $K=5$, as well as $\lambda_{1}=0.01$, $\lambda_{2}=1.0$, $\lambda_{3}=1.0$, and $\lambda_{4}=1.0$. For different augmentation strategies, we implement the weak augmentation to the teacher's input as random rotation, random cropping, horizontal flipping, and strong augmentation to the student's input as random rotation, random cropping, horizontal flipping, random contrast, CutMix \cite{french2019semi}, brightness changes \cite{perez2018data}, morphological changes (diffeomorphic deformations). We adopt two popular evaluation metrics: Dice coefficient (DSC) and Average Symmetric Surface Distance (ASD) for 3D segmentation results. Of note, the projection heads, the predictor, and the representation head are only used in training, and will be discarded during inference.

\subsection{Main Results}
\label{section:results}
We show the effectiveness of our method under three different label ratios (\ie, 1\%, 5\%, 10\%). We also compare \alg with various state-of-the-art SSL and fully-supervised methods on three datasets: ACDC \cite{bernard2018deep}, LiTS \cite{bilic2019liver}, MMWHS \cite{zhuang2016multi}. We choose 2D \unet \cite{ronneberger2015u} as backbone, and compare against SSL methods including \emm \cite{vu2019advent}, \cct \cite{ouali2020semi}, \dan \cite{zhang2017deep}, \urpc \cite{luo2021efficient}, \dct \cite{qiao2018deep}, \ict \cite{verma2019interpolation}, \mt \cite{tarvainen2017mean}, \uamt \cite{yu2019uncertainty}, \cps \cite{chen2021semi}, {\simcvd \cite{you2022simcvd}, MMS \cite{lou2023min}}, \scs \cite{hu2021semi}, \gcl \cite{chaitanya2020contrastive}, and \plc \cite{chaitanya2021local}. {The upper bound and lower bound method are \unet trained with full/limited supervisions (\unetF/\unetL), respectively.}
We report quantitative comparisons on ACDC and LiTS in Table \ref{table:acdc_lits_main}.

\begin{table}[ht]
	\begin{center}
	\caption{Comparison of segmentation performance (DSC{[}\%{]}/ASD{[}mm{]}) on MMWHS under three labeled ratio settings (1\%, 5\%, 10\%). On all three labeled settings, \alg significantly outperforms all the state-of-the-art methods by a significant margin. The best results are in \tf{bold}.}
	\label{table:mmwhs_main}
    \begin{adjustbox}{width=0.98\linewidth}
	\begin{tabular}{ccccccc}
		\toprule
		& \multicolumn{2}{c}{1\% Labeled} & \multicolumn{2}{c}{5\% Labeled} & \multicolumn{2}{c}{10\% Labeled} \\
        \cmidrule(r){2-3} \cmidrule(r){4-5} \cmidrule(r){6-7}
		{Method}
		            & {DSC~$\uparrow$}
		            & {ASD~$\downarrow$}
		            & {DSC~$\uparrow$}
		            & {ASD~$\downarrow$}  
		            & {DSC~$\uparrow$}
		            & {ASD~$\downarrow$} 
		            \\ \midrule
		\unetF \cite{ronneberger2015u}
		            & 85.8
                    & 8.01
		            & 85.8
                    & 8.01
		            & 85.8
                    & 8.01
                    \\
		\unetL        
                    & {58.3} 
		            & {33.9}
                    & {77.8}
                    & {24.4} 
		            & 82.7
                    & 13.5
                    \\\midrule 
	    \emm \cite{vu2019advent} 
                    & {54.5}
                    & {41.1}
                    & {80.6}
                    & {17.3}
                    & 82.1
                    & 15.1
                    \\ 
                    
	    \cct \cite{ouali2020semi} 
                    & {62.8}
                    & {27.5}
                    & {79.0}
                    & {21.9}
                    & 79.4
                    & 16.3
                    \\ 
	    \dan \cite{zhang2017deep} 
                    & {52.8}
                    & {48.4}
                    & {79.4}
                    & {22.7}
                    & 80.2
                    & 15.0
                    \\
	    \urpc \cite{luo2021efficient} 
                    & {65.7}
                    & {29.7}
                    & {73.7}
                    & {20.5}
                    & 81.9
                    & 12.3
                    \\ 
	    \dct \cite{qiao2018deep} 
                    & {62.7}
                    & {27.5}
                    & {80.8}
                    & {23.0}
                    & 82.8
                    & 12.4
                    \\ 
	    {\simcvd \cite{you2022simcvd} }
                    & {{64.6}}
                    & {{39.5}}
                    & {{77.0}}
                    & {{20.2}}
                    & {80.3}
                    & {16.8}
                    \\ 
	    {\mms \cite{lou2023min} }
                    & {{66.2}}
                    & {{36.9}}
                    & {{80.6}}
                    & {{18.4}}
                    & {82.1}
                    & {16.7}
                    \\ 
	    \ict \cite{verma2019interpolation} 
                    & {59.9}
                    & {32.8}
                    & {76.5}
                    & {15.4}
                    & 82.2
                    & 12.0
                    \\ 
	    \mt \cite{tarvainen2017mean} 
                    & {58.8}
                    & {35.6}
                    & {76.5}
                    & {15.5}
                    & 79.4
                    & 19.8
                    \\ 
	    \uamt \cite{yu2019uncertainty} 
                    & {61.1}
                    & {37.6}
                    & {76.3}
                    & {20.9}
                    & 83.7
                    & 14.2
                    \\ 
	    \cps \cite{chen2021semi} 
                    & {58.8}
                    & {33.6}
                    & {78.3}
                    & {22.5}
                    & 82.0
                    & 13.1
                    \\ 
	    \gcl \cite{chaitanya2020contrastive} 
                    & {71.6}
                    & {20.3}
                    & {83.5}
                    & \tf{7.41}
                    & 86.7
                    & 8.76
                    \\ 
	    \scs \cite{hu2021semi} 
                    & {71.4}
                    & {19.3}
                    & {81.1}
                    & {11.5}
                    & 82.6
                    & 9.68
                    \\ 
	    \plc \cite{chaitanya2021local} 
                    & {71.5}
                    & {19.8}
                    & {83.4}
                    & {10.7}
                    & {86.0}
                    & {9.65}
                    \\ 
        \gr \cb \alg (ours)
                    & \tf{83.9}
                    & \tf{9.06}
                    & \tf{86.3}
                    & {8.22}
                    & \tf{87.6}
                    & \tf{6.83}
                    \\ 
		              \bottomrule
	\end{tabular}
    \end{adjustbox}
    \end{center}
\end{table}

\begin{figure*}[ht]
\centering
\includegraphics[width=0.93\linewidth]{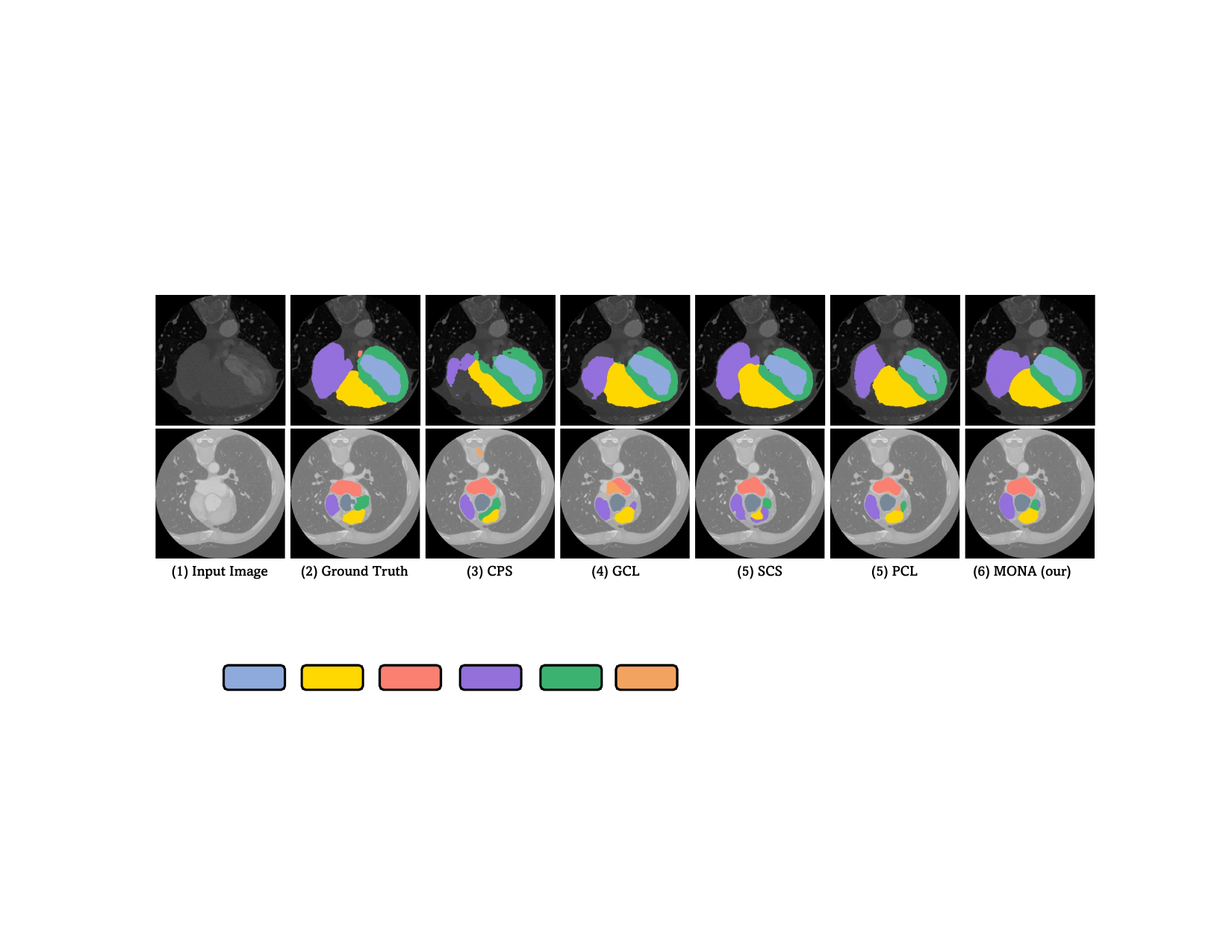}
\vspace{-10pt}
\caption{Visualization of segmentation results on MMWHS with 5\% labeled ratio. As is shown, \alg  consistently generates more accurate predictions compared to all other SSL methods with a significant performance margin. Different anatomical classes are shown in different colors (LV: \includegraphics[scale=0.38,valign=c]{figures/protype_blue.pdf}; LA: \includegraphics[scale=0.4,valign=c]{figures/protype_yellow.pdf}; RV: \includegraphics[scale=0.4,valign=c]{figures/protype_red.pdf}; RA: \includegraphics[scale=0.4,valign=c]{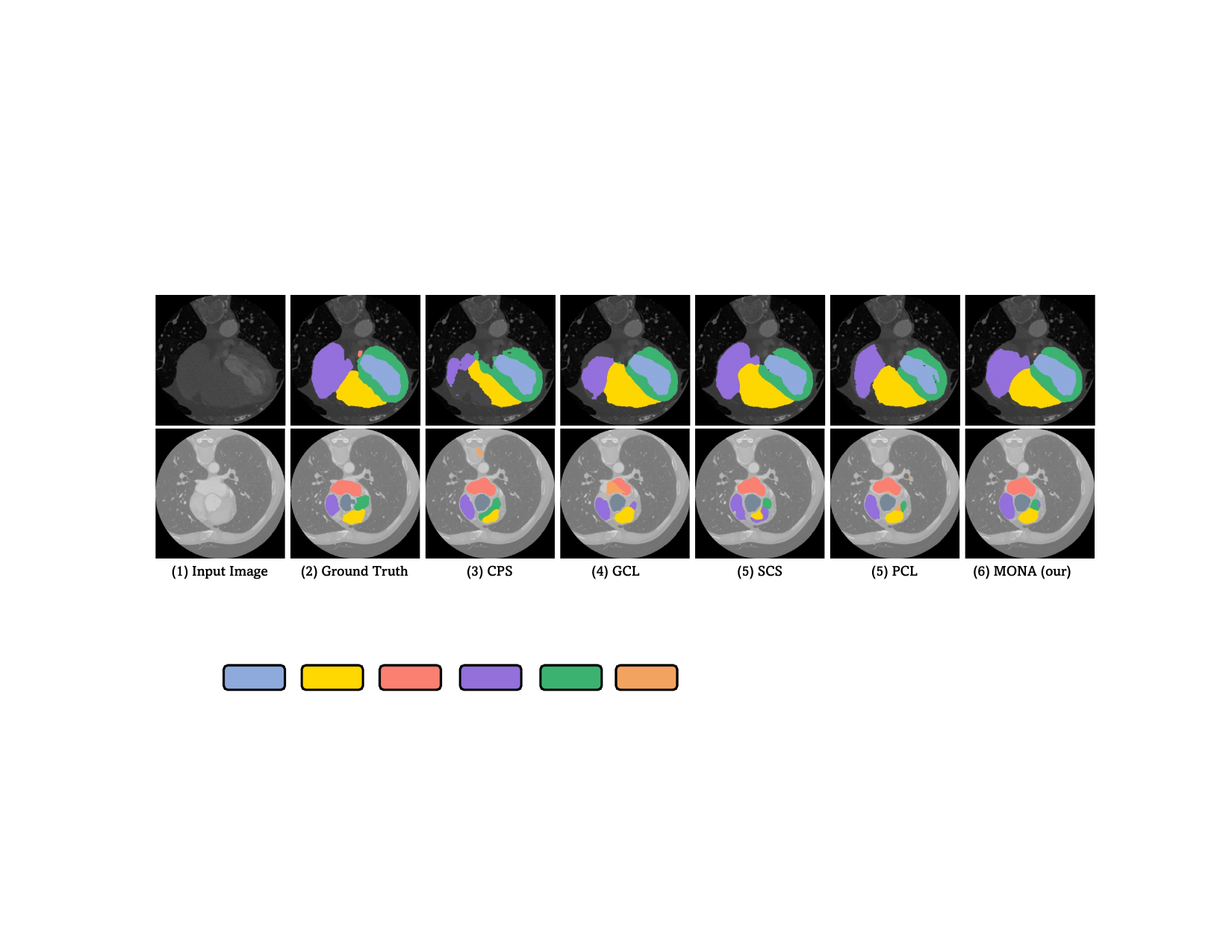}; Myo: \includegraphics[scale=0.4,valign=c]{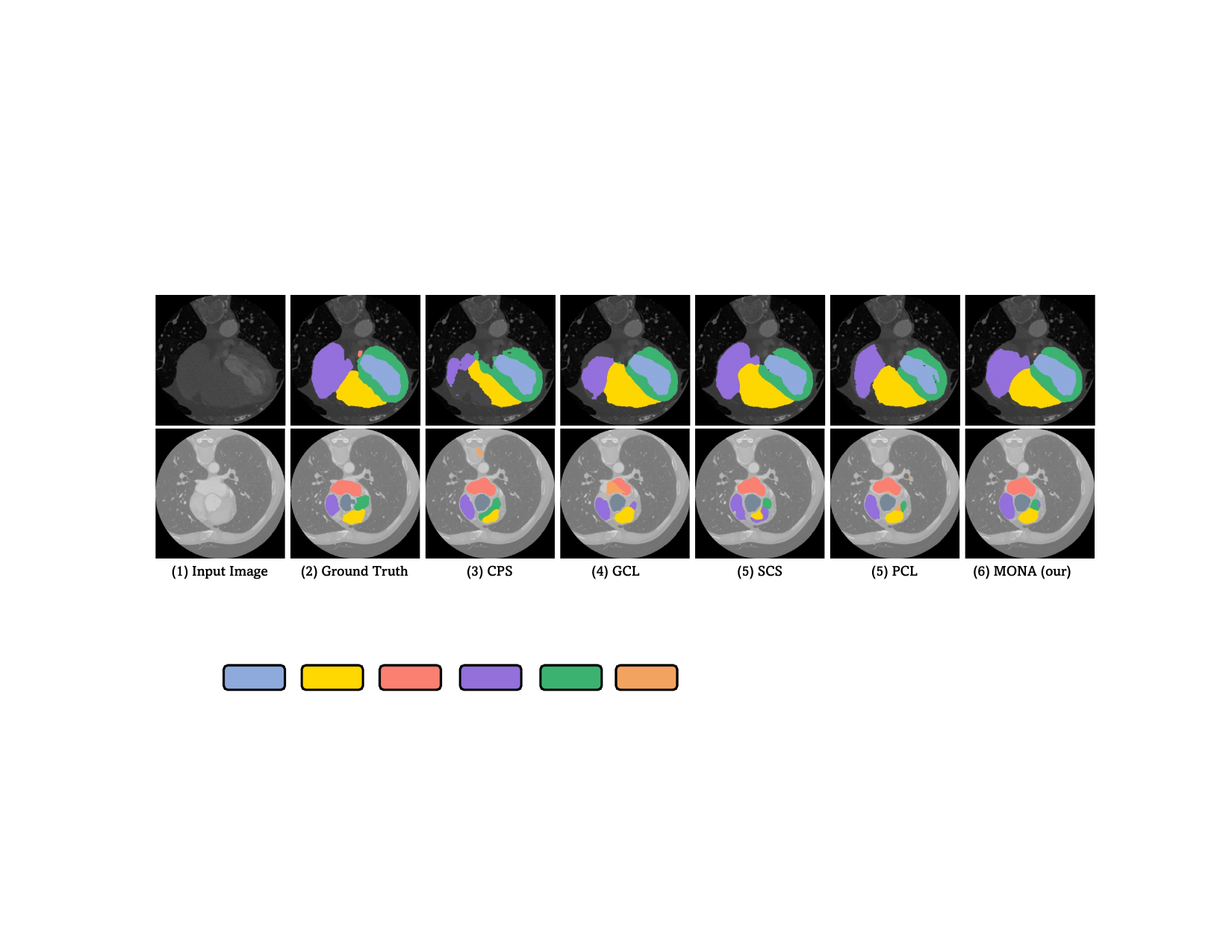}; PA: \includegraphics[scale=0.4,valign=c]{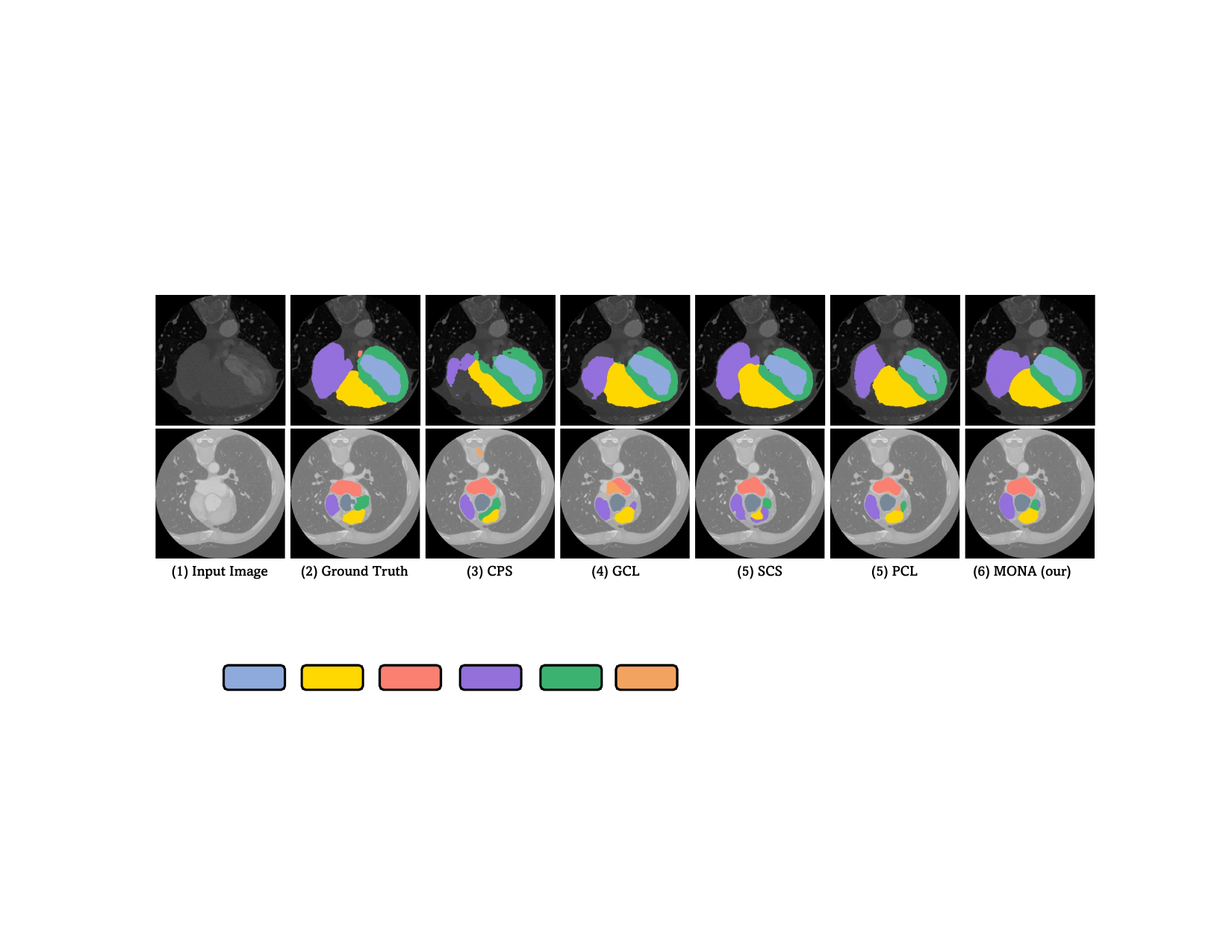}).} 
\label{fig:vis_mm}
\vspace{-5pt}
\end{figure*}

% {\bf ACDC.}
\myparagraph{ACDC.}
We benchmark performances on ACDC with respect to different labeled ratios (\ie, 1\%, 5\%, 10\%). The following observations can be drawn: \underline{\textit{First}}, our proposed \alg significantly outperforms all other SSL methods under three different label ratios. 
Especially, with only extremely limited labeled data available (\eg, 1\%), our method obtains massive gains of {22.9\%} and {10.67} in Dice and ASD (\ie, dramatically improving the performance from 59.7\% to 82.6\%). \underline{\textit{Second}}, as shown in Figure \ref{fig:vis_acdc}, we can see the clear advantage of \alg, where the anatomical boundaries of different tissues are clearly more pronounced such as RV and Myo regions. As seen, our method is capable of producing consistently sharp and accurate object boundaries across various challenge scenarios.

\myparagraph{LiTS.}
We then evaluate \alg on LiTS, using 1\%, 5\%, 10\% labeled ratios. The results are summarized in Table~\ref{table:acdc_lits_main} and Figure \ref{fig:vis_lits}. The conclusions are highly consistent with the above ACDC case: \underline{\textit{First}}, at the different label ratios (\ie, 1\%, 5\%, 10\%), \alg consistently outperforms all the other SSL methods, which again demonstrates the effectiveness of learning representations for the inter-class correlations and intra-class invariances under imbalanced class-distribution scenarios. In particular, our \alg, trained on a 1\% labeled ratio (\ie, extremely limited labels), dramatically improves the previous best averaged Dice score from {59.3\%} to 64.1\% by a large margin, and even performs on par with previous SSL methods using 10\% labeled ratio. \underline{\textit{Second}}, our method consistently outperforms all the evaluated SSL methods under different label ratios (\ie, 1\%, 5\%, 10\%). \underline{\textit{Third}}, as shown in Figure \ref{fig:vis_lits}, we observe that \alg is able to produce more accurate results compared to the previous best schemes.

\myparagraph{MMWHS}
Lastly, we validate \alg on MMWHS, under 1\%, 5\%, 10\% labeled ratios. The results are provided in Table \ref{table:mmwhs_main} and Figure \ref{fig:vis_mm}. 
Again, we found that \alg consistently outperforms all other SSL methods with a significant performance margin, and achieves the highest accuracy among all the SSL approaches under three labeled ratios. 
As is shown, \alg trained at the 1\% labeled ratio significantly outperforms all other methods trained at the 1\% labeled ratio, even over the 5\% labeled ratio. {Concretely, \alg trained at only 1\% labeled ratio outperforms the second-best method (\ie, \gcl) both at the 1\% and 5\% labeled, yielding 12.3\% and {2.8\%} gains in Dice.} We also observe the similar patterns that, \alg performs better or on par with all the other methods at 10\% labeled, which again demonstrates the superiority of \alg in extremely limited labeled data regimes.

Overall, we conclude that \alg provides robust performance on all the medical datasets we evaluated, exceeding that of the fully-supervised baseline, and outperforming all other SSL methods.

\subsection{Ablation Study}
\label{subsection:ablation}
In this subsection, we conduct comprehensive analyses to understand the inner workings of \alg on ACDC under 5\% labeled ratio.

\subsection{Effects of Different Components}
Our key observation is that it is crucial to build meaningful anatomical representations for the inter-class correlations and intra-class invariances under imbalanced class-distribution scenarios can further improve performance. Upon our choice of architecture, we first consider our CL pre-trained method (\ie, \glc). To validate this, we experiment with the key components in \alg on ACDC, including: (1) tailness, (2) consistency, and (3) diversity. The results are in Table \ref{table:component_ablation}. As is shown, each key component makes a clear difference and leveraging all of them contributes to the remarkable performance improvements. This suggests the importance of learning meaningful representations for the inter-class correlations and intra-class invariances within the entire dataset. The intuitions behind each concept are as follows: (1) \textbf{Only \em{tailness}:} many anatomy-rich head classes would be sampled; (2) \textbf{Only \em{consistency}:} it would lead to object collapsing due to the different anatomical variations; (3) \textbf{Only \em{diversity}:} oversampling too many negative samples often comes at the cost of performance degradation. By combining {\em{tailness}}, {\em{consistency}}, and {\em{diversity}}, our method confers a significant advantage at representation learning in imbalanced feature similarity, semantic consistency and anatomical diversity, which further highlights the superiority of our proposed \alg (More results in Section \ref{section:ablation-CL}).

\begin{table}[t]
\caption{Ablation on model component: (1) {{tailness}}; (2) {{consistency}}; (3) {{diversity}}, compared to the Vanilla and our \alg.} 
\label{table:component_ablation}
\vspace{-10pt}
\centering
\resizebox{0.45\textwidth}{!}{
\begin{tabular}{l|c c} 
\toprule
\multirow{2}{*}{Method} & \multicolumn{2}{c}{\textbf{Metrics}} \\ 
    & Dice{[}\%{]}~$\uparrow$
    & ASD{[}mm{]}~$\downarrow$ 
    \\ \midrule
    Vanilla  
    & 74.2
    & 3.89
    \\
    \midrule
    \gr \quad w/ tailness
    & 83.1
    & 0.602
    \\
    \quad w/ consistency
    & {84.2}
    & 1.86
    \\
    \gr \quad w/ diversity
    & {78.2}
    & 3.07 
    \\ 
    \quad w/ tailness + consistency
    & 88.1
    & 0.864
    \\
    \gr \quad w/ consistency + diversity
    & {80.2}
    & 2.11
    \\
    \quad w/ tailness + diversity
    & 85.0
    & 0.913
    \\ 
    \midrule
	\gr \cb \alg (ours)
    & \textbf{88.8}
    & \textbf{0.622}   
    \\ 
    \bottomrule
\end{tabular}
}
\vspace{-10pt}
\end{table}

\subsection{Effects of Different Augmentations}
In addition to further improving the quality and stability in anatomical representation learning, we claim that \alg also gains robustness using augmentation strategies. For augmentation strategies, previous works \cite{tejankar2021isd,zheng2021ressl,sohn2020fixmatch} show that composing the weak augmentation strategy for the ``pivot-to-target'' model (\ie, trained with limited labeled data and a large number of unlabeled data) is helpful for anatomical representation learning since the standard contrastive strategy is too aggressive, intuitively leading to a ``hard'' task (\ie, introducing too many disturbances and yielding model collapses).
Here we examine whether and how applying different data augmentations helps \alg. In this work, we implement the weak augmentation to the teacher's input as random rotation, random cropping, horizontal flipping, and strong augmentation to the student's input as random rotation, random cropping, horizontal flipping, random contrast, CutMix \cite{french2019semi}, brightness changes \cite{perez2018data}, morphological changes (diffeomorphic deformations). We summarize the results in Table \ref{tab:aug_ablation}, and list the following observations: (1) \tf{weak augmentations benefits more}: composing the weak augmentation for the teacher model and strong augmentation for the student model significantly boosts the performance across two benchmark datasets. (2) \tf{same augmentation pairs do not make more gains}: interestingly, applying same type of augmentation pairs does not lead to the best performance compared to different types of augmentation pairs. We postulate that composing different augmentations can be considered as a harder albeit more useful strategy for anatomical representation learning, making feature more generalizable.

\begin{table}[t]
\centering
\caption{Ablation on augmentation strategies for \alg on the ACDC and LiTS dataset under 5\% labeled ratio.
} 
\vspace{-10pt}
\label{tab:aug_ablation}
\resizebox{0.4\textwidth}{!}{
\begin{tabular}{ll|c c|c c} 
\toprule
\multicolumn{2}{c|}{\multirow{2}{*}{Dataset}}
&\multicolumn{2}{c|}{Student Teacher}
&\multicolumn{2}{c}{\textbf{Metrics}} \\ 
& & \multicolumn{2}{c|}{Aug.  Aug.} 
    & Dice{[}\%{]}~$\uparrow$
    & ASD{[}mm{]}~$\downarrow$ 
    \\ \midrule
    \multicolumn{2}{c|}{\multirow{4}{*}{\bf ACDC}}
    & \cgr Weak
    & \cgr Weak
    & \cgr 86.0
    & \cgr 1.02 
    \\ & 
    & Strong 
    & Weak
    & {88.8}
    & {0.622}
    \\ &
    & \cgr Weak
    & \cgr Strong
    & \cgr 86.4
    & \cgr 2.83
    \\  &
    & Strong
    & Strong
    & 88.8
    & 2.07
    \\ 
    \midrule
    \multicolumn{2}{c|}{\multirow{4}{*}{\bf LiTS}}
    & \cgr Weak
    & \cgr Weak
    & \cgr 62.3
    & \cgr 26.5
    \\ &
    & {Strong}
    & {Weak}
    & {67.3}
    & {16.4}
    \\ &
    & \cgr {Weak}
    & \cgr {Strong}
    & \cgr 64.3
    & \cgr 34.7
    \\  &
    & Strong
    & Strong
    & 66.5
    & 21.1
    \\ 
    \bottomrule
\end{tabular}
}
\vspace{-10pt}
\end{table}

\begin{table*}[t]
	\begin{center}
	\caption{Ablation study of different contrastive learning frameworks on ACDC under three labeled ratio settings (1\%, 5\%, 10\%). We compare two settings: with or without {\em{fine-tuning}} on the segmentation performance (DSC{[}\%{]}/ASD{[}mm{]}). We denote `without {\em{fine-tuning}}'' to only {\em{pretaining}}. On all three labeled settings, our methods (\ie, \glc and \alg) significantly outperform all the state-of-the-art methods by a significant margin. All the experiments are run with three different random seeds. The best results are in \tf{bold}.}
	\label{table:cl_comparsion}
    \begin{adjustbox}{width=0.7\linewidth}
	\begin{tabular}{cccccccc}
		\toprule
		& & \multicolumn{2}{c}{1\% Labeled} & \multicolumn{2}{c}{5\% Labeled} & \multicolumn{2}{c}{10\% Labeled} \\
        \cmidrule(r){3-4} \cmidrule(r){5-6} \cmidrule(r){7-8}
        % \vspace{0.15in}
		Framework & {Method}
		            & {DSC~$\uparrow$}
		            & {ASD~$\downarrow$}
		            & {DSC~$\uparrow$}
		            & {ASD~$\downarrow$}  
		            & {DSC~$\uparrow$}
		            & {ASD~$\downarrow$} 
		            \\ \midrule
		\multirow{6}{*}{only {\em{pretaining}}}
		&\mocoo\cite{chen2020improved}
                    & {38.6} 
		            & {22.4}
                    & {56.2}
                    & {17.9} 
		            & {81.0}
                    & {5.36}
                    \\
		&\mocoknn \cite{van2021revisiting}
		            & 39.5
                    & 22.0
		            & 58.3
                    & 15.7
		            & {83.1}
                    & {7.18}
                    \\
		&\simclr\cite{chen2020simple}
		            & 34.8
                    & 24.3
		            & 51.7
                    & 19.9
		            & 80.3
                    & 4.16
                    \\                    
		&\byol\cite{grill2020bootstrap}
		            & 35.9
                    & {7.25}
		            & 65.9
                    & 9.15
		            & 85.6
                    & 2.51
                    \\                       
		&\isd\cite{tejankar2021isd}
		            & {45.8}
                    & {17.2}
		            & {71.0}
                    & {4.29}
		            & {85.3}
                    & {2.97}
                    \\                    
		& \cgr \vb {\glc (ours)}
		            & { \cgr \tf{49.3}}
                    & { \cgr \tf{7.11}}
		            & { \cgr \tf{74.2}}
                    & { \cgr \tf{3.89}}
		            & { \cgr \tf{86.5}}
                    & { \cgr \tf{1.92}}
                    \\                    
                    \midrule
		\multirow{6}{*}{w/ \textit{fine-tuning}}
		&\mocoo\cite{chen2020improved}
                    & {77.7}
                    & {4.78}
                    & {85.4}
                    & {1.52}
                    & {86.7}
                    & {1.74}
                    \\ 
		&\mocoknn \cite{van2021revisiting}
                    & {78.0}
                    & {4.28}
                    & {85.9}
                    & {1.51}
                    & 86.9
                    & 1.61
                    \\ 
		&\simclr\cite{chen2020simple}
                    & {75.7}
                    & {4.33}
                    & {83.2}
                    & {2.06}
                    & 86.1
                    & 2.25
                    \\
		&\byol\cite{grill2020bootstrap}
                    & {77.1}
                    & {4.84}
                    & {85.3}
                    & {2.06}
                    & 88.1
                    & 0.994
                    \\ 
		&\isd\cite{tejankar2021isd}
                    & {80.1}
                    & {3.00}
                    & {83.8}
                    & {1.95}
                    & {88.6}
                    & {1.20}
                    \\ 
        &\cgr \cb \alg (ours)
                    & \cgr \tf{82.6}
                    & \cgr \tf{2.03}
                    & \cgr \tf{88.8}
                    & \cgr \tf{0.622}
                    & \cgr \tf{90.7}
                    & \cgr \tf{0.864}
                    \\ 
		              \bottomrule
	\end{tabular}
    \end{adjustbox}
    \end{center}
    \vspace{-10pt}
\end{table*}

\begin{table*}[t]
	\begin{center}
	\caption{Ablation study of different principles across different contrastive learning frameworks under various labeled ratio settings (1\%, 5\%, 10\%). Experiments are conducted on ACDC using \unet \cite{ronneberger2015u} as the backbone with three independent runs. Here we report the segmentation performance in terms of DSC{[}\%{]} and ASD{[}mm{]}. On all three labeled settings, incorporating our methods (\ie, tailness, consistency, and diversity) consistently achieve superior model robustness gains across different state-of-the-art CL frameworks.} %The best results are highlighted in \tf{bold}.}
	\vspace{-5pt}
	\label{table:cl_principle_comparsion}
    \begin{adjustbox}{width=0.8\linewidth}
	\begin{tabular}{cccccccc}
		\toprule
		& & \multicolumn{2}{c}{1\% Labeled} & \multicolumn{2}{c}{5\% Labeled} & \multicolumn{2}{c}{10\% Labeled} \\
        \cmidrule(r){3-4} \cmidrule(r){5-6} \cmidrule(r){7-8}
        % \vspace{0.15in}
		Framework & {Principle}
		            & {DSC~$\uparrow$}
		            & {ASD~$\downarrow$}
		            & {DSC~$\uparrow$}
		            & {ASD~$\downarrow$}  
		            & {DSC~$\uparrow$}
		            & {ASD~$\downarrow$} 
		            \\ \midrule
		\multirow{8}{*}{\mocoo\cite{chen2020improved}}
		&  Vanilla
                    & {38.6}
                    & {22.4}
                    & {56.2}
                    & {17.9}
                    & {81.0}
                    & {5.36}
                    \\
		& tailness
                    & {65.0}
                    & {3.99}
                    & {81.3}
                    & {1.13}
                    & {84.8}
                    & {1.52}
                    \\
		& consistency
                    & {70.3}
                    & {6.88}
                    & {79.5}
                    & {3.65}
                    & {81.9}
                    & {3.79}
                    \\ 
		& diversity
                    & {47.5}
                    & {10.2}
                    & {72.2}
                    & {5.82}
                    & {83.1}
                    & {5.46}
                    \\ 
		& tailness + consistency
                    & {75.8}
                    & {5.10}
                    & {83.8}
                    & {1.89}
                    & {85.7}
                    & {2.81}
                    \\ 
		& consistency + diversity
                    & {73.3}
                    & {6.34}
                    & {75.4}
                    & {5.63}
                    & {82.7}
                    & {4.39}
                    \\ 
		& tailness + diversity
                    & {75.5}
                    & {5.40}
                    & {82.4}
                    & {3.39}
                    & {85.3}
                    & {2.49}
                    \\ 
		& \cgr tailness + consistency + diversity
                    & \cgr {77.7}
                    & \cgr {4.78}
                    & \cgr {85.4}
                    & \cgr {1.52}
                    & \cgr {86.7}
                    & \cgr {1.74}
                    \\ 
                    \midrule
		\multirow{8}{*}{\mocoknn \cite{van2021revisiting}}
		&  Vanilla
                    & {39.5}
                    & {22.0}
                    & {58.3}
                    & {15.7}
                    & {83.1}
                    & {7.18}
                    \\
		& tailness
                    & {66.7}
                    & {3.87}
                    & {83.7}
                    & {1.39}
                    & {86.2}
                    & {1.17}
                    \\
		& consistency
                    & {72.2}
                    & {5.97}
                    & {81.7}
                    & {3.13}
                    & {84.8}
                    & {3.57}
                    \\ 
		& diversity
                    & {50.5}
                    & {9.53}
                    & {73.5}
                    & {5.92}
                    & {83.5}
                    & {5.45}
                    \\ 
		& tailness + consistency
                    & {76.3}
                    & {4.51}
                    & {84.3}
                    & {2.51}
                    & {85.7}
                    & {2.72}
                    \\ 
		& consistency + diversity
                    & {72.1}
                    & {6.45}
                    & {78.6}
                    & {5.56}
                    & {84.6}
                    & {4.08}
                    \\ 
		& tailness + diversity
                    & {75.5}
                    & {5.75}
                    & {81.7}
                    & {3.01}
                    & {85.6}
                    & {2.14}
                    \\ 
		& \cgr tailness + consistency + diversity
                   & \cgr {78.0}
                    & \cgr {4.28}
                    & \cgr {85.9}
                    & \cgr {1.51}
                    & \cgr {86.9}
                    & \cgr {1.61}
                    \\ 
                    \midrule
		\multirow{8}{*}{\simclr\cite{chen2020simple}}
		&  Vanilla
                    & {34.8}
                    & {24.3}
                    & {51.7}
                    & {19.9}
                    & {80.3}
                    & {4.16}
                    \\
		& tailness
                    & {61.9}
                    & {3.52}
                    & {79.8}
                    & {1.70}
                    & {84.5}
                    & {2.01}
                    \\
		& consistency
                    & {70.8}
                    & {5.46}
                    & {78.1}
                    & {2.89}
                    & {84.7}
                    & {2.24}
                    \\ 
		& diversity
                    & {45.9}
                    & {8.49}
                    & {68.3}
                    & {6.46}
                    & {83.5}
                    & {3.92}
                    \\ 
		& tailness + consistency
                    & {73.0}
                    & {4.24}
                    & {83.0}
                    & {2.43}
                    & {85.9}
                    & {2.46}
                    \\ 
		& consistency + diversity
                    & {71.1}
                    & {6.49}
                    & {75.6}
                    & {4.47}
                    & {83.9}
                    & {3.51}
                    \\ 
		& tailness + diversity
                    & {71.9}
                    & {4.98}
                    & {81.1}
                    & {2.92}
                    & {85.3}
                    & {2.94}
                    \\ 
		& \cgr tailness + consistency + diversity
                    & \cgr {75.7}
                    & \cgr {4.33}
                    & \cgr {83.2}
                    & \cgr {2.06}
                    & \cgr {86.1}
                    & \cgr {2.25}
                    \\ 
                    \midrule
		\multirow{8}{*}{\byol\cite{grill2020bootstrap}}
		&  Vanilla
                    & {35.9}
                    & {7.25}
                    & {65.9}
                    & {9.15}
                    & {85.6}
                    & {2.51}
                    \\
		& tailness
                    & {64.2}
                    & {4.26}
                    & {81.9}
                    & {1.71}
                    & {86.4}
                    & {0.871}
                    \\
		& consistency
                    & {71.0}
                    & {5.45}
                    & {80.2}
                    & {3.22}
                    & {87.0}
                    & {2.08}
                    \\ 
		& diversity
                    & {47.5}
                    & {6.29}
                    & {70.7}
                    & {5.48}
                    & {85.7}
                    & {2.36}
                    \\ 
		& tailness + consistency
                    & {73.7}
                    & {4.74}
                    & {83.3}
                    & {2.01}
                    & {87.7}
                    & {1.25}
                    \\ 
		& consistency + diversity
                    & {70.9}
                    & {6.08}
                    & {76.0}
                    & {4.55}
                    & {86.1}
                    & {1.93}
                    \\ 
		& tailness + diversity
                    & {72.2}
                    & {5.81}
                    & {82.6}
                    & {3.12}
                    & {86.4}
                    & {1.33}
                    \\ 
		& \cgr tailness + consistency + diversity
                    & \cgr {77.1}
                    & \cgr {4.84}
                    & \cgr {85.3}
                    & \cgr {2.06}
                    & \cgr {88.1}
                    & \cgr {0.994}
                    \\ 
                    \midrule
		\multirow{8}{*}{\isd\cite{tejankar2021isd}}
		&  Vanilla
                    & {45.8}
                    & {17.2}
                    & {71.0}
                    & {4.29}
                    & {85.3}
                    & {2.97}
                    \\
		& tailness
                    & {71.8}
                    & {2.80}
                    & {79.2}
                    & {1.47}
                    & {87.1}
                    & {1.02}
                    \\
		& consistency
                    & {78.8}
                    & {3.98}
                    & {80.2}
                    & {2.90}
                    & {87.3}
                    & {1.94}
                    \\ 
		& diversity
                    & {54.5}
                    & {8.03}
                    & {77.1}
                    & {6.90}
                    & {86.2}
                    & {2.58}
                    \\ 
		& tailness + consistency
                    & {79.6}
                    & {2.99}
                    & {83.0}
                    & {1.93}
                    & {88.2}
                    & {1.24}
                    \\ 
		& consistency + diversity
                    & {75.1}
                    & {4.72}
                    & {77.8}
                    & {3.65}
                    & {86.5}
                    & {2.45}
                    \\ 
		& tailness + diversity
                    & {74.8}
                    & {7.98}
                    & {82.3}
                    & {2.02}
                    & {87.2}
                    & {1.35}
                    \\ 
		& \cgr tailness + consistency + diversity
                    & \cgr {80.1}
                    & \cgr {3.00}
                    & \cgr {83.8}
                    & \cgr {1.95}
                    & \cgr {88.6}
                    & \cgr {1.20}
                    \\ 
                    \midrule
		\multirow{8}{*}{\alg (ours)}
		&  Vanilla
                    & {49.3}
                    & {7.11}
                    & {74.2}
                    & {3.89}
                    & {86.5}
                    & {1.92}
                    \\
		& tailness
                    & {75.1}
                    & {1.83}
                    & {83.1}
                    & {0.602}
                    & {87.8}
                    & {0.577}
                    \\
		& consistency
                    & {81.5}
                    & {2.78}
                    & {84.2}
                    & {1.86}
                    & {88.4}
                    & {1.33}
                    \\ 
		& diversity
                    & {62.8}
                    & {3.97}
                    & {78.2}
                    & {3.07}
                    & {86.6}
                    & {1.88}
                    \\ 
		& tailness + consistency
                    & {81.2}
                    & {2.19}
                    & {88.1}
                    & {0.864}
                    & {90.1}
                    & {0.966}
                    \\ 
		& consistency + diversity
                    & {81.8}
                    & {3.29}
                    & {80.2}
                    & {2.11}
                    & {86.9}
                    & {1.67}
                    \\ 
		& tailness + diversity
                    & {78.6}
                    & {3.33}
                    & {85.0}
                    & {0.913}
                    & {89.5}
                    & {0.673}
                    \\ 
		& \cgr tailness + consistency + diversity
                    & \cgr {82.6}
                    & \cgr {2.03}
                    & \cgr {88.8}
                    & \cgr {0.622}
                    & \cgr {90.7}
                    & \cgr {0.864}
                    \\ 
		              \bottomrule
	\end{tabular}
    \end{adjustbox}
    \end{center}
    \vspace{-10pt}
\end{table*}

\subsection{Generalization across Contrastive Learning Frameworks}
\label{section:ablation-CL}
As discussed in Section \ref{subsection:framework}, our motivation comes from the observation that there are only very limited labeled data and a large amount of unlabeled data in real-world clinical practice. As the fully-supervised methods generally outperform all other SSL methods by clear margins, we postulate that leveraging massive unlabeled data usually introduces additional noise during training, leading to degraded segmentation quality. To address this challenge, ``contrastive learning'' is a straightforward way to leverage existing unlabeled data in the learning procedure.
As supported in Section \ref{section:experiments}, our findings have shown that \alg generalizes well across different benchmark datasets (\ie, ACDC, LiTS, MMWHS) with diverse labeled settings (\ie, 1\%, 5\%, 10\%). In the following subsection, we further demonstrate that our proposed principles (\ie, tailness, consistency, diversity) are beneficial to various state-of-the-art CL-based frameworks (\ie, \mocoo\cite{chen2020improved}, \mocoknn \cite{van2021revisiting}, \simclr\cite{chen2020simple}, \byol\cite{grill2020bootstrap}, and \isd\cite{tejankar2021isd}) with different label settings. More details about these three principles can be found in Section \ref{subsection:acr}. 
{Of note, \alg can consistently outperform the semi-supervised methods on diverse benchmark datasets with only 10\% labeled ratio.}

\begin{figure}[h]
\begin{center}
\centerline{
\includegraphics[width=0.85\linewidth]{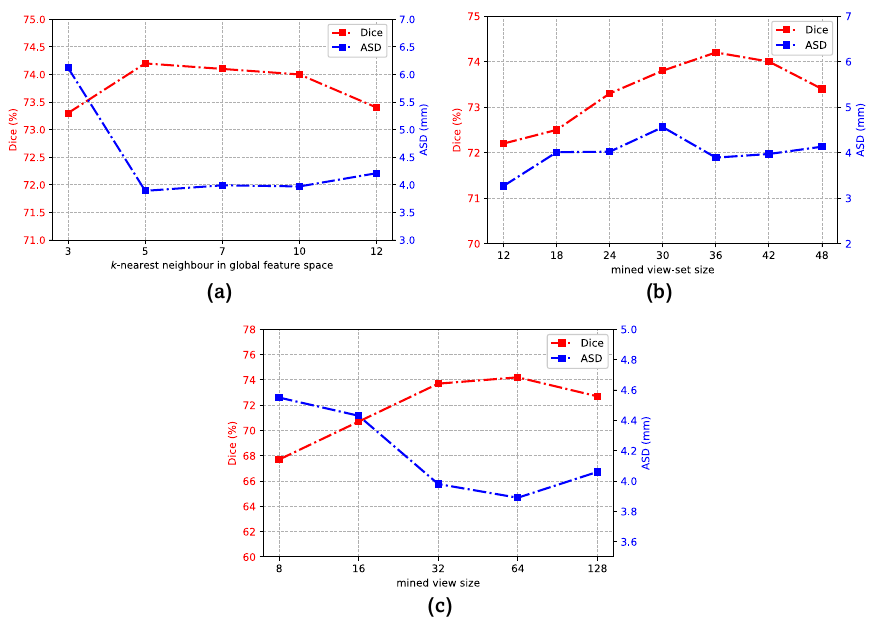}
}
\vspace{-10pt}
\caption{Effects of $k$-nearest neighbour in global feature space,  mined view-set size, and mined view patch size. We report Dice and ASD of \glc on the ACDC dataset at the 5\% labeled ratio. All the experiments are run with three different random seeds.}
\label{fig:ablation-glc}
\end{center}
\end{figure}

\myparagraph{Training Details of Competing CL Methods.}
We identically follow the default setting in each CL framework \cite{chen2020improved,van2021revisiting,chen2020simple,grill2020bootstrap,tejankar2021isd} except the epochs number. We train each model in the semi-supervised setting. For labeled data, we follow the same training strategy in Section \ref{subsection:framework}. As for unlabeled data, we strictly follow the default settings in each baseline. Specifically, for fair comparisons, we pre-train each CL baseline and our CL pre-trained method (\ie, \glc) for 100 epochs in all our experiments. Then we fine-tune each CL model with our proposed principles with the same setting, as provided in Section \ref{section:implementation}. For \mocoknn \cite{van2021revisiting}, given the following ablation study we set the number of neighbors $k$ as 5, and further compare different settings of $k$ in \mocoknn \cite{van2021revisiting} in the following subsection. All the experiments are run with three different random seeds, and the results we present are calculated from the validation set. {Of note, \unetF is fully supervised.}

\myparagraph{Comparisons with CL-based Frameworks.}
Table~\ref{table:cl_comparsion} presents the comparisons between our methods (\ie, \glc and \alg) and various CL baselines.  
After analyzing these extensive results, we can draw several consistent observations. 
\underline{\textit{First}}, we can observe that our \glc achieves performance gains under all the labeled ratios, which not only demonstrates the effectiveness of our method, but also further verifies this argument using ``global-local'' strategy \cite{chaitanya2020contrastive}. 
The average improvement in Dice obtained by \glc could reach up to 2.53\%, compared to the second best scores at different labeled ratios. 
\underline{\textit{Second}}, we can find that incorporating our proposed three principles significantly outperforms the CL baselines without fine-tuning, across all frameworks and different labeled ratios. 
These experimental findings suggest that our proposed three principles can further improve the generalization across different labeled ratios. 
On the ACDC dataset at the 1\% labeled ratio, the backbones equipped with all three principles all obtain promising results, improving the performance of \mocoo, \mocoknn, \simclr, \byol, \isd, and our \glc by 39.1\%,  38.5\%, 40.9\%, 41.2\%, 34.3\%, 33.3\%, respectively. 
The ACDC dataset is a popular multi-class medical image segmentation dataset, with massive imbalanced or long-tailed class distribution cases. 
The imbalanced or long-tailed class distribution gap could result in the vanilla models overfitting to the head class, and generalizing very poorly to the tail class. 
With the addition of under-sampling the head classes, the principle -- \textit{tailness} -- can be deemed as the prominent strategy to yield better generalization and segmentation performance of the models across different labeled ratios. 
Similar results are found under 5\% and 10\% labeled ratios. \underline{\textit{Third}}, over a wide range of labeled ratios, \alg can establish the new state-of-the-art performance bar for semi-supervised 2D medical image segmentation. 
{Particularly, \alg – for the first time – boosts the segmentation performance with 10\% labeled ratio over the fully-supervised \unet (\unetF). From Table~\ref{table:acdc_lits_main} we see that on LiTS with 10\% labeled ratio, \alg outperforms \unetF by 0.8 in terms of DSC (69.3 vs 68.5). From table \ref{table:mmwhs_main}, \alg outperforms \unetF on MMWHS by 1.8 in terms of DSC (87.6 vs 85.8).}
Table~\ref{table:acdc_lits_main} and \ref{table:mmwhs_main} also show that \alg significantly outperforms all the other semi-supervised methods by a large margin.
In summary, our methods (\ie, {\glc} and {\alg}) obtain remarkable performance on all labeled settings. The results verify the superiority of our proposed three principles (\ie, tailness, consistency, diversity) jointly, which makes the model well generalize to different labeled settings, and can be easily and seamlessly plugged into all other CL frameworks  \cite{chen2020improved,van2021revisiting,chen2020simple,grill2020bootstrap,tejankar2021isd} adopting the two-branch design, demonstrating that these concepts consistently help the model yield extra performance boosts for them all.

\begin{figure}[h]
\begin{center}
\centerline{
\includegraphics[width=0.85\linewidth]{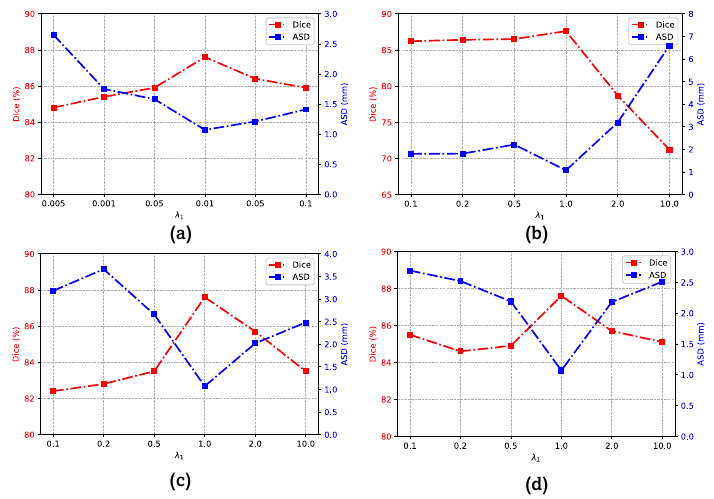}
}
\vspace{-10pt}
\caption{Effects of hyperparameters $\lambda_{1}, \lambda_{2}, \lambda_{3}, \lambda_{4}$. We report Dice and ASD of \alg on the ACDC dataset at the 5\% labeled ratio. All the experiments are run with three different random seeds.}
\label{fig:hyper}
\end{center}
\end{figure}

\myparagraph{Generalization Across CL Frameworks.}
As demonstrated in Table \ref{table:cl_principle_comparsion}, incorporating {\em{tailness}}, {\em{consistency}}, and {\em{diversity}} have obviously superior performance boosts, which is aligned with consistent observations with Section \ref{subsection:ablation} can be drawn. This suggests that these three principles can serve as desirable properties for medical image segmentation in both supervised and unsupervised settings.

\myparagraph{Does $k$-nearest neighbour in global feature space help?}
Prior work suggests that the use of stronger augmentations and nearest neighbour can be the very effective tools in learning additional invariances \cite{van2021revisiting}. That is, both the specific number of nearest neighbours and specific augmentation strategies are necessary to achieve superior performance. In this subsection, we study the relationship of $k$-nearest neighbour in global feature space and the behavior of our \glc for the downstream medical image segmentation. Here we first follow the same augmentation strategies in \cite{van2021revisiting} (More analysis on data augmentation can be found in Section \ref{subsection:ablation}), and then conduct ablation studies on how the choices of $k$-nearest neighbour can influence the performance of \glc. Specifically, we run \glc on the ACDC dataset at the 5\% labeled ratio with a range of $k \in \{3, 5, 7, 10, 12\}$. Figure \ref{fig:ablation-glc}(a) shows the ablation study on $k$-nearest neighbour in global feature on the segmentation performance. As is shown, we find that {\glc} at $k = 5, 7, 10$ have almost identical performance ($k=5$ has slightly better performance compared to other two settings), and all have superior performance compared to all others. In contrast, {\glc} -- through the use of randomly selected samples -- is capable of finding diverse yet semantically consistent anatomical features from the entire dataset, which at the same time gives better segmentation performance.

\myparagraph{Ablation Study of Mined View-Set Size.}
We then conduct ablation studies on how the mined view-set size in \glc can influence the segmentation performance. We run \glc on the ACDC dataset at 5\% labeled ratio with a range of the mined view-set size $\in \{12, 18, 24, 30, 36, 42, 48\}$. The results are summarized in Figure \ref{fig:ablation-glc}(b). As is shown, we find that \glc trained with view-set size $36$ and $42$ have similar or superior performance compared to all other settings, and our model with view-set size of $36$ achieves the highest performance.

\myparagraph{Ablation Study of Mined View Size.}
Lastly, we study the influence of mined view size on the segmentation performance. Specifically, we run \glc on the ACDC dataset at the 5\% labeled ratio with a range of the mined view size $\in \{8, 16, 32, 64, 128\}$. Figure \ref{fig:ablation-glc}(c) shows the ablation study of mined view size on the segmentation performance. As is shown, we observe that {\glc} trained with mined view size of $32$ and $64$ have similar segmentation abilities, and both achieve superior performance compared to other settings. Here the mined view size of $64$ works the best for \glc to yield the superior segmentation performance.

\myparagraph{Conclusion.}
Given the above ablation study, we set $k$, mined view-set size, patch size as $5$, $36$, $64\times 64$ in our experiments, respectively. This can contribute to satisfactory segmentation performance.

\subsection{Ablation Study of Anatomical Contrastive Reconstruction}
\label{section:ablation-acr}
In this section, we give a detailed analysis on the choice of the parameters in the anatomical contrastive reconstruction fine-tuning, and take a deeper look and understand how they contribute to the final segmentation performance. All the hyperparameters in training are the same across three benchmark datasets. All the experiments are run with three different random seeds, and the experimental results we report are calculated from the validation set.

\myparagraph{Ablation Study of Total Loss $\mathcal{L}_\text{total}$.}
Proper choices of hyperparameters in total loss $\mathcal{L}_\text{total}$ (See Section \ref{subsection:acr}) play a significant role in improving overall segmentation quality. We hence conduct the fine-grained analysis of the hyperparameters in $\mathcal{L}_\text{total}$. In practice, we fine-tune the models with three independent runs, and grid search to select multiple hyperparameters. Specifically, we run \alg on the ACDC dataset at the 5\% labeled ratio with a range of different hyperparameters $\lambda_{1}\!\in\!\{0.005, 0.001, 0.05, 0.01, 0.05, 0.1\}$, and $\lambda_{2}, \lambda_{3}, \lambda_{4}\!\in\!\{0.1, 0.2, 0.5, 1.0, 2.0, 10.0\}$. We summarize the results in Figure \ref{fig:hyper}, and take the best setting $\lambda_{1}\!=\!0.01$, $\lambda_{2}\!=\!1.0$, $\lambda_{3}\!=\!1.0, \lambda_{4}\!=\!1.0$.

\myparagraph{Ablation Study of Confidence Threshold ${\delta_\theta}$.}
We then assess the influence of ${\delta_\theta}$ on the segmentation performance. Specifically, we run \alg on the ACDC dataset at the 5\% labeled ratio with a range of the confidence threshold ${\delta_\theta} \in \{0.85, 0.88, 0.91, 0.94, 0.97, 1.0\}$. Figure \ref{fig:ablation}(a) shows the ablation study of ${\delta_\theta}$ on the segmentation performance. As we can see, \alg on ${\delta_\theta}=0.97$ has superior performance compared to other settings.

\begin{figure}[h]
\begin{center}
\centerline{
\includegraphics[width=0.85\linewidth]{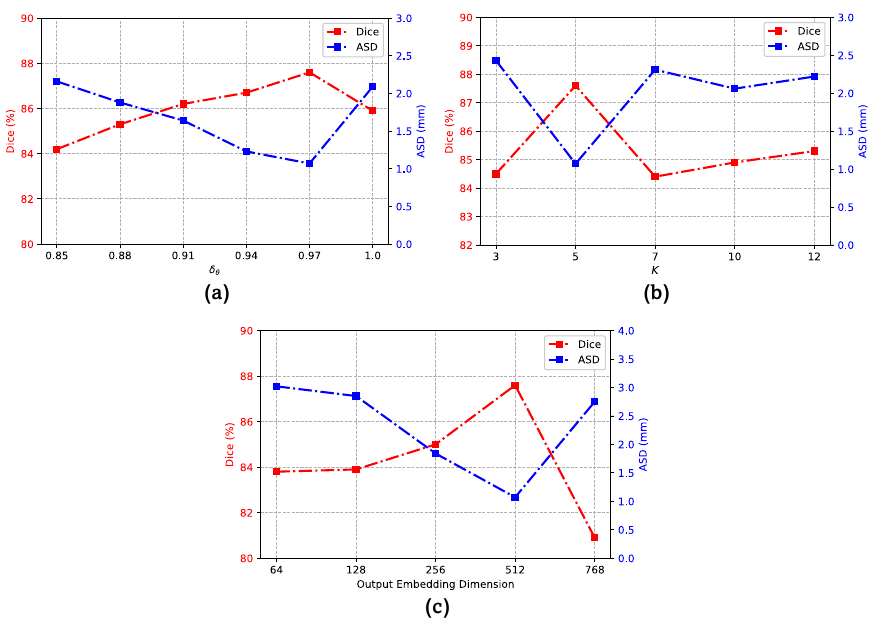}
}
\vspace{-10pt}
\caption{Effects of confidence threshold ${\delta_\theta}$, $K$-nearest neighbour constraint, and output embedding dimension. We report Dice and ASD of \alg on the ACDC dataset at the 5\% labeled ratio. All the experiments are run with three different random seeds.}
\label{fig:ablation}
\end{center}
\end{figure}

\myparagraph{Ablation Study of $K$-Nearest Neighbour Constraint.}
Next, we conduct ablation studies on how the choices of $K$ in $K$-nearest neighbour constraint can influence the segmentation performance. Specifically, we run \alg on the ACDC dataset at the 5\% labeled ratio with a range of the choices $K \in \{3, 5, 7, 10, 12\}$. Figure \ref{fig:ablation}(b) shows the ablation study of $K$ choices on the segmentation performance. As we can see, \alg on $K=5$ achieves the best performance compared to other settings.

\myparagraph{Ablation Study of Output Embedding Dimension.}
Finally, we study the influence of the output embedding dimension on the segmentation performance of \alg. Specifically, we run \alg on the ACDC dataset at the 5\% labeled ratio with a range of output embedding dimension $\in \{64, 128, 256, 512, 768\}$. Figure \ref{fig:ablation}(c) shows the ablation study of output embedding dimension on the segmentation performance. As we can see, \alg with output embedding dimension of $512$, can be trained to outperform other settings.

\myparagraph{Conclusion.}
Given the above ablation study, we select $\lambda_{1}\!=\!0.01$, $\lambda_{2}\!=\!1.0$, $\lambda_{3}\!=\!1.0, \lambda_{4}\!=\!1.0$, ${\delta_\theta}=0.97$, $K=5$, output embedding dimension $=512$ in our experiments. This can provide the optimal segmentation performance across different labeled ratios.
\section{Conclusion}
In this paper, we have presented \alg, a semi-supervised contrastive learning method for 2D medical image segmentation. We start from the observations that medical image data always exhibit a long-tail class distribution, and the same anatomical objects (\ie, liver regions for two people) are more similar to each other than different objects (\eg liver and tumor regions). We further expand upon this idea by introducing anatomical contrastive formulation, as well as equivariance and invariances constraints. Both empirical and theorical studies show that we can formulate a generic set of perspectives that allows us to learn meaningful representations across different anatomical features, which can dramatically improve the segmentation quality and alleviate the training memory bottleneck. 
Extensive experiments on three datasets demonstrate the superiority of our proposed framework in the long-tailed medical data regimes with extremely limited labels. We believe our results contribute to a better understanding of medical image segmentation and point to new avenues for long-tailed medical image data in realistic clinical applications.

\ifCLASSOPTIONcaptionsoff
  \newpage
\fi

% {\small
\bibliographystyle{IEEEtran}
\bibliography{tpami}
% }

\begin{IEEEbiography}[{\includegraphics[width=1in,height=1.25in,clip,keepaspectratio]{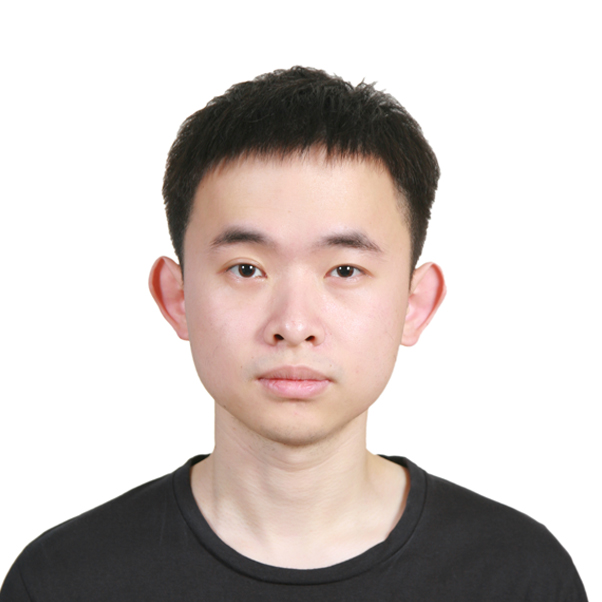}}]{Chenyu You} is a Ph.D. student at Yale University. He received the B.S. and M.Sc. degree from Rensselaer Polytechnic Institute and Stanford University, respectively. His research interests are broadly in the area of machine learning theory and algorithms intersecting the fields of computer \& medical vision, natural language processing, and signal processing. Previously, he has published several works in NeurIPS, ICLR, CVPR, MICCAI, ACL, EMNLP, NAACL, AAAI, IJCAI, and KDD, etc. He has been awarded as the Outstanding/Distinguished Reviewer of CVPR, MICCAI, IEEE Transactions on Medical Imaging (TMI), and Medical Physics. He serves as an area chair at MICCAI, and has co-led workshops such as foundation model on healthcare (ICML).
\end{IEEEbiography}
\vspace{-20pt}

\begin{IEEEbiography}[{\includegraphics[width=1in,height=1.25in,clip,keepaspectratio]{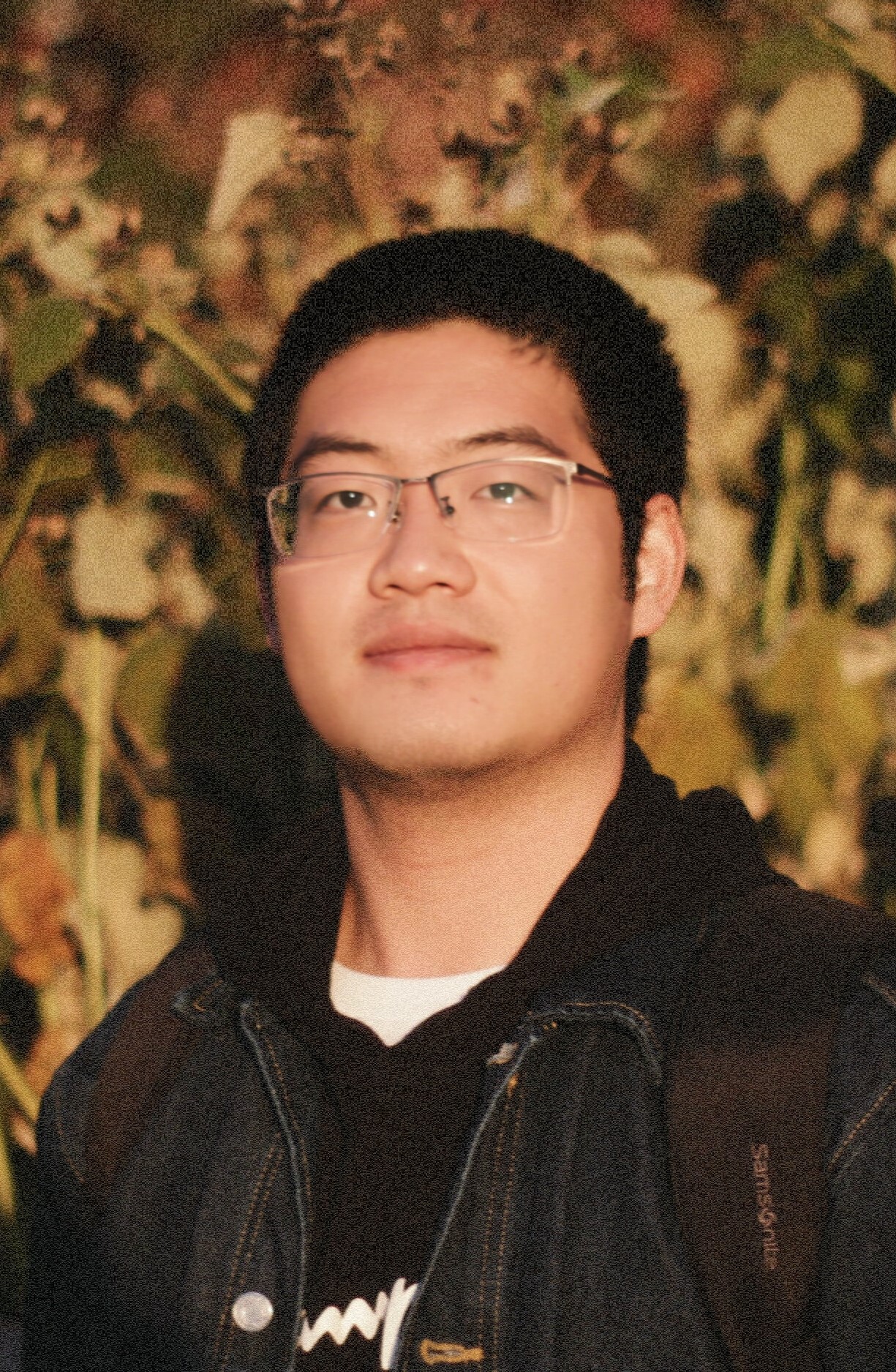}}]{Weicheng Dai} is a Postgraduate Associate at Yale University.  He received the B.S. and M.Sc. degree from Southeast University and New York University, respectively. His research interests include computer vision, medical image analysis, natural language processing, and theoretical machine learning. He has published papers at top-tier conferences in MICCAI and IPMI, etc.  He has served as reviewers for IEEE-TMI, MICCAI, AAAI, and IJCAI.
\end{IEEEbiography}
\vspace{-20pt}

\begin{IEEEbiography}[{\includegraphics[width=1in,height=1.25in,clip,keepaspectratio]{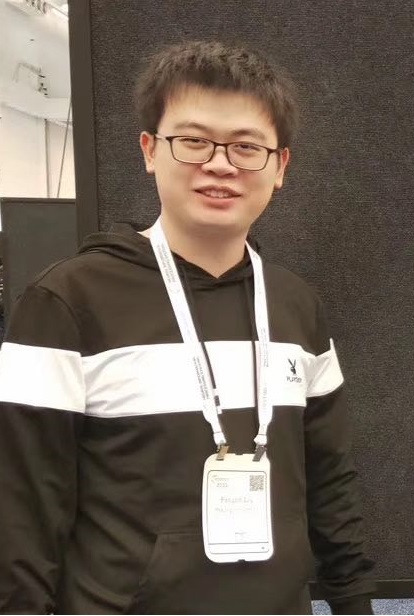}}]{Fenglin Liu} 
% or if you just want to reserve a space for a photo:
% \begin{IEEEbiography}{Fenglin Liu}
is a PhD student at the University of Oxford. His research interests include Natural Language Processing (NLP), especially vision-and-language, Machine Learning, and their applications to clinical, i.e., Clinical NLP.
He has published papers at top-tier journals and conferences, e.g., TPAMI, NeurIPS, CVPR, ACL, EMNLP, NAACL. He has served as a senior program committee member for IJCAI and was awarded as the Distinguished/Outstanding Reviewer of CVPR, AAAI, and IJCAI.
\end{IEEEbiography}
\vspace{-20pt}

\begin{IEEEbiography}[{\includegraphics[width=1in,height=1.25in,clip,keepaspectratio]{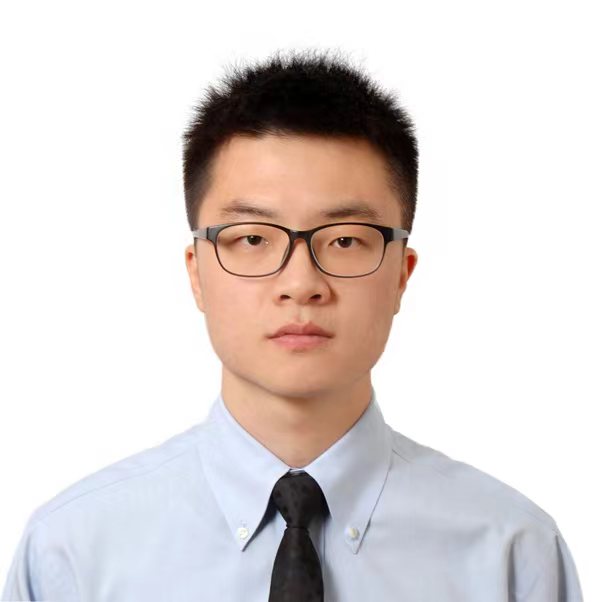}}]{Yifei Min} is a Ph.D. student in Statistics and Data Science at Yale University. He received the B.S. and M.A. degree from the University of Hong Kong and University of Pennsylvania, respectively. He has a broad research interest in the area of machine learning theory with a focus on online learning, reinforcement learning, and medical imaging analysis. Previously, several of his work have been published in NeurIPS, ICLR, ICML, UAI, AISTATS, etc. 
\end{IEEEbiography}
\vspace{-20pt}

\begin{IEEEbiography}[{\includegraphics[width=1in,height=1.25in,clip,keepaspectratio]{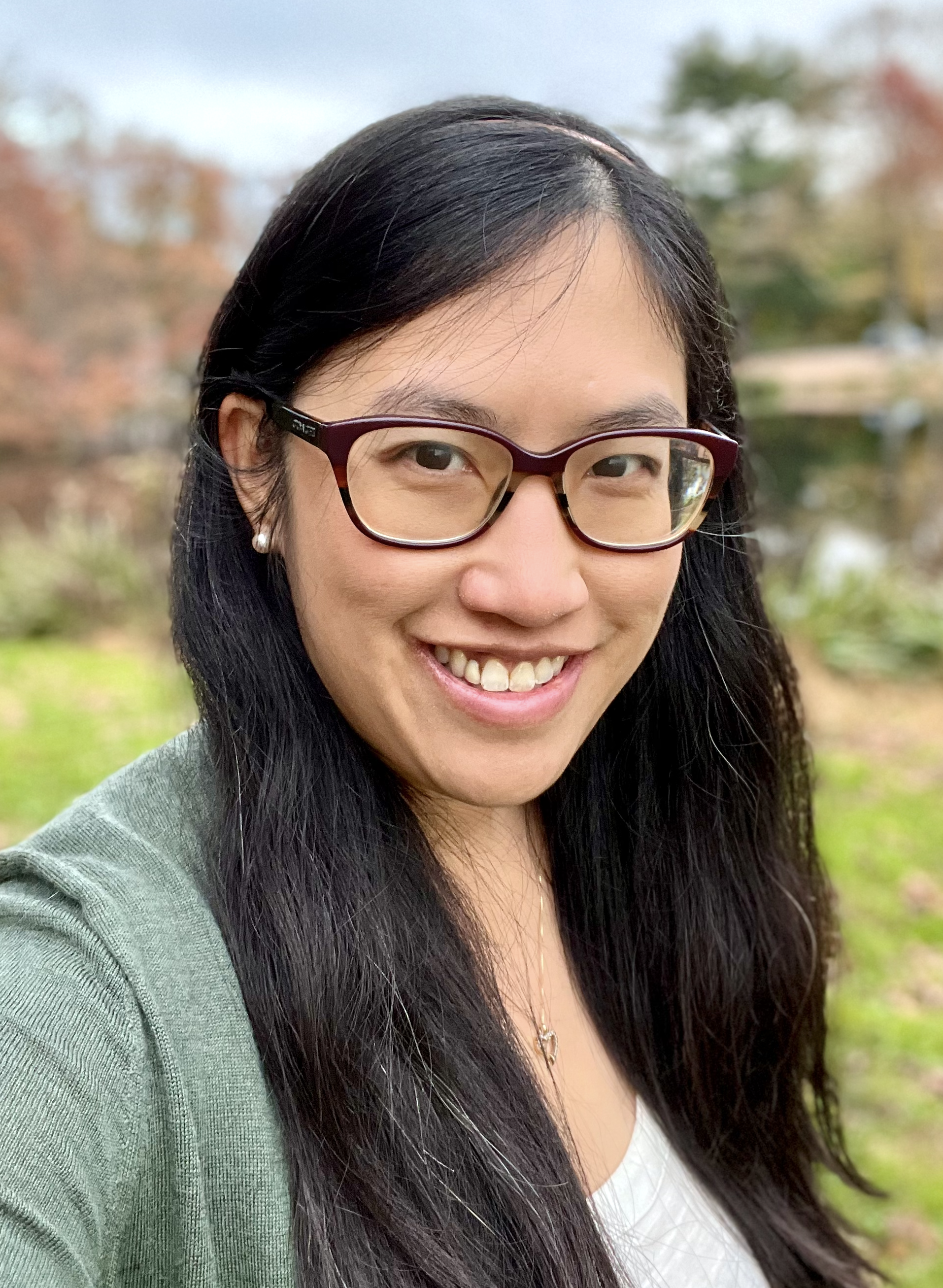}}]{Nicha C. Dvornek} is an Assistant Professor of Radiology \& Biomedical Imaging and Biomedical Engineering at Yale University. She received the B.S. in Biomedical Engineering at Johns Hopkins University and the M.S., M.Phil., and Ph.D. in Engineering and Applied Science at Yale University. She completed postdoctoral training in Diagnostic Radiology at Yale and as a T32 Fellow with the Yale Child Study Center. Her recent work focuses on the development and application of novel machine learning approaches to medical image processing and analysis, with applications spanning from neurodevelopmental disorders to cancer. Dr. Dvornek and her team’s work has been recognized by multiple paper and poster awards at international meetings. She is a member of the MICCAI society and on the board of the Women in MICCAI. She serves as an associate editor for Computerized Medical Imaging and Graphics and the Journal of Medical Imaging, an area chair for MICCAI and MIDL conferences, and a reviewer for multiple journals and conferences such as Medical Image Analysis, IEEE Transactions on Medical Imaging, NeurIPS, and CVPR.
\end{IEEEbiography}
\vspace{-20pt}

\begin{IEEEbiography}[{\includegraphics[width=1in,height=1.25in,clip,keepaspectratio]{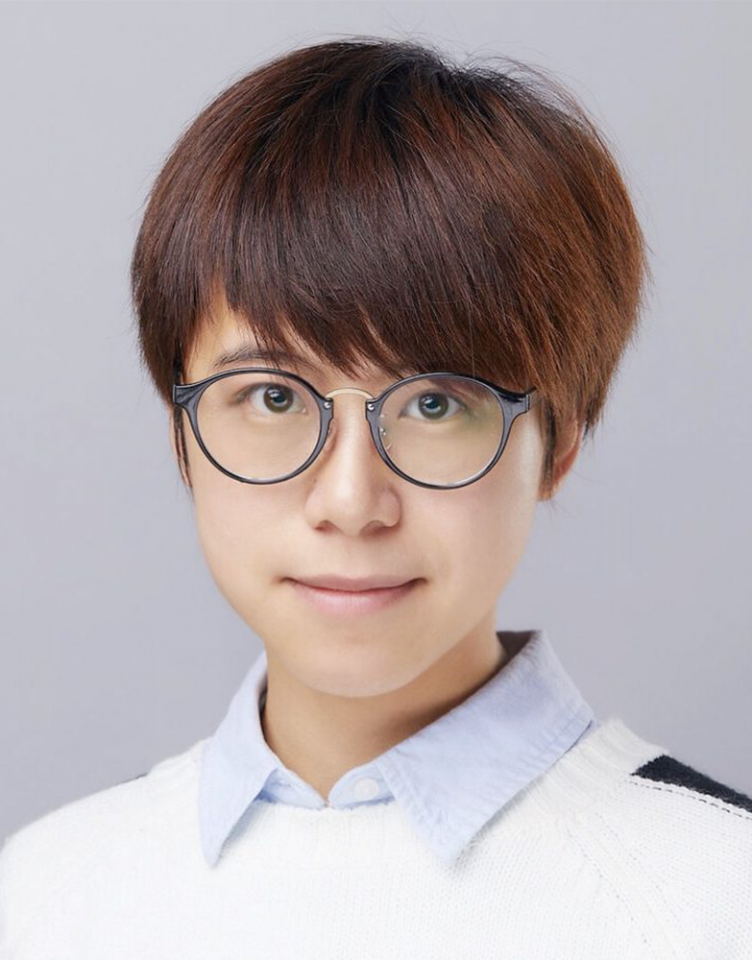}}]{Xiaoxiao Li} 
received the Ph.D. degree from Yale University, New Haven, CT, USA, in 2020. She was a Post-Doctoral Research Fellow with the Computer Science Department, Princeton University, Princeton, NJ, USA. Since August 2021, she has been an Assistant Professor at the Department of Electrical and Computer Engineering (ECE), Uni- versity of British Columbia (UBC), Vancouver, BC, Canada. In the last few years, she has over 30 papers published in leading machine learning conferences and journals, including NeurlPS, ICML, ICLR, MICCAL, IPML, BMVC, IEEE T RANSACTIONS ON M EDICAL LMAGING , and Medical lmage Analysis. Her research interests include range across the interdisciplinary fields of deep learning and biomedical data analysis, aiming to improve the trustworthiness of artificial intelligence (AI) systems for health care. Dr. Li’s work has been recognized with several best paper awards at international conferences.
\end{IEEEbiography}
\vspace{-20pt}

\begin{IEEEbiography} 
[{\includegraphics[width=1in,height=1.25in,clip,keepaspectratio]{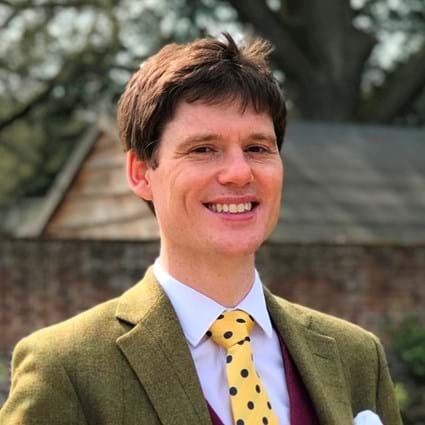}}]{David A. Clifton} is the Royal Academy of Engineering Chair of Clinical Machine Learning at the University of Oxford, and OCC Fellow in AI \& Machine Learning at Reuben College, Oxford.  He was the first AI scientist to be appointed to an NIHR Research Professorship, which is the UK medical research community's ``flagship Chair programme''.  He is a Fellow of the Alan Turing Institute, Research Fellow of the Royal Academy of Engineering, Visiting Chair in AI for Healthcare at the University of Manchester, and a Fellow of Fudan University, China.
He studied Information Engineering at Oxford's Department of Engineering Science, supervised by Prof. Lionel Tarassenko CBE, Chair of Electrical Engineering.  His research focuses on the development of machine learning for tracking the health of complex systems. His previous research resulted in patented systems for jet-engine health monitoring, used with the engines of the Airbus A380, the Boeing 787 ``Dreamliner'', and the Eurofighter Typhoon. Since graduating from his DPhil in 2009, he has focused mostly on the development of AI-based methods for healthcare. Patents arising from this collaborative research have been commercialised via university spin-out companies OBS Medical, Oxehealth, and Sensyne Health, in addition to collaboration with multinational industrial bodies.  He was awarded a Grand Challenge award from the UK Engineering and Physical Sciences Research Council, which is an EPSRC Fellowship that provides long-term strategic support for ``future leaders in healthcare''.  His research has been awarded over 35 academic prizes; in 2018, he was joint winner of the inaugural ``Vice-Chancellor's Innovation Prize'', which identifies the best interdisciplinary research across the entirety of the University of Oxford.
\end{IEEEbiography}
\vspace{-20pt}

\begin{IEEEbiography} 
[{\includegraphics[width=1in,height=1.25in,clip,keepaspectratio]{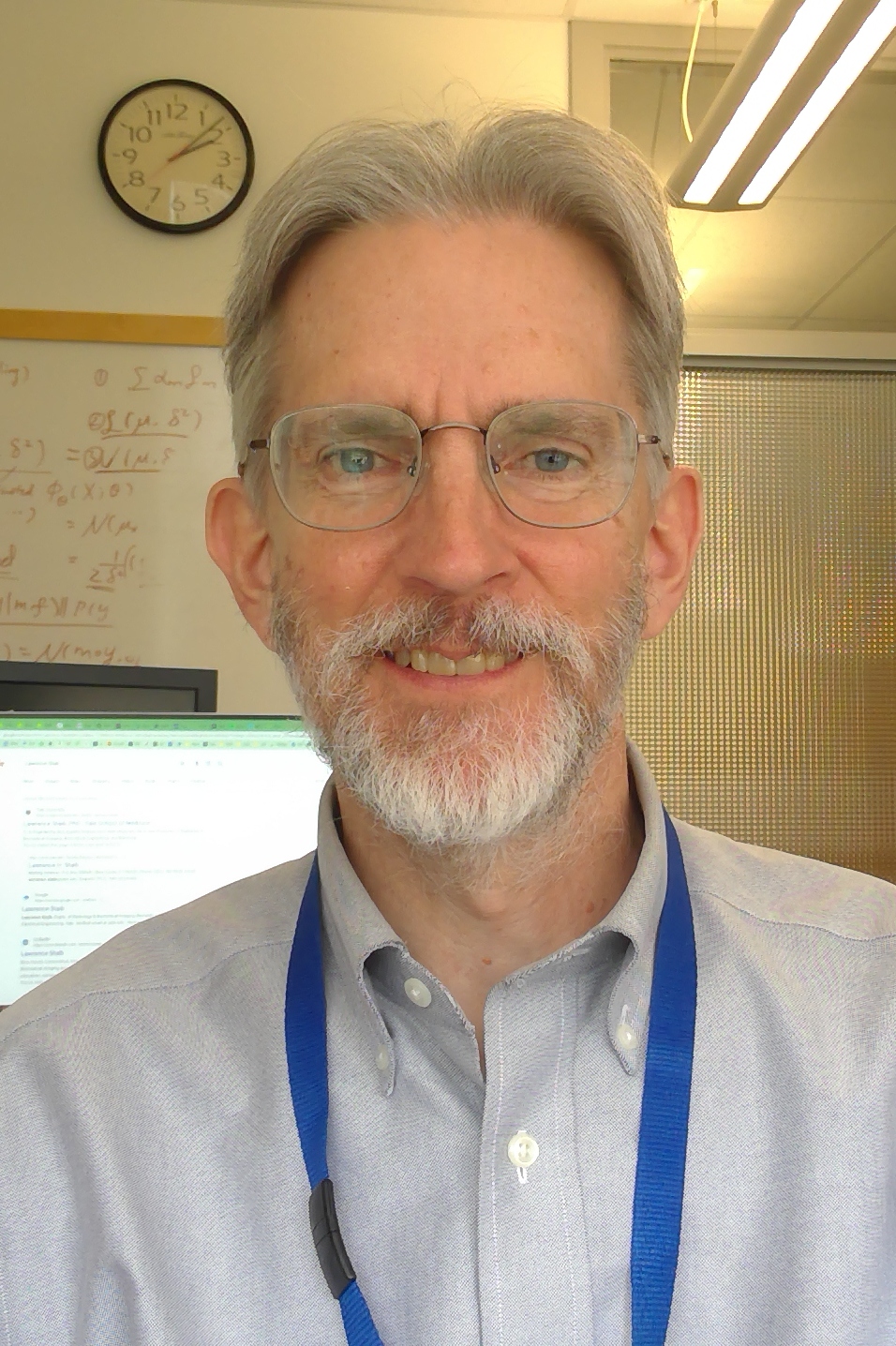}}]{Lawrence Staib} is Professor of Radiology \& Biomedical Imaging, Biomedical Engineering, and Electrical Engineering at Yale University. He is Director of Undergraduate Studies in Biomedical Engineering at Yale. He is a Fellow of the American Institute for Medical and Biological Engineering and of the Medical Image Computing and Computer-Assisted Intervention Society. He is a Distinguished Investigator of the Academy for Radiology \& Biomedical Imaging Research.  He received his B.A. in Physics from Cornell University and his Ph.D. in Engineering and Applied Science from Yale University.  He is a member of the editorial board of Medical Image Analysis and Associate Editor of IEEE Transactions on Biomedical Engineering. His research interests are in medical image analysis and machine learning.
\end{IEEEbiography}
\vspace{-20pt}

\begin{IEEEbiography} 
[{\includegraphics[width=1in,height=1.25in,clip,keepaspectratio]{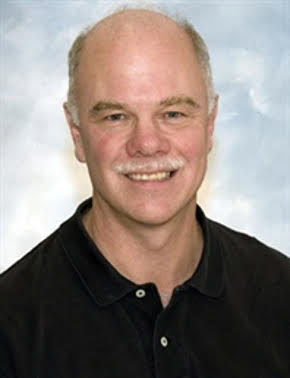}}]{James S. Duncan} is the Ebenezer K. Hunt Professor of Biomedical Engineering and a Professor of Radiology \& Biomedical Engineering, Electrical Engineering and Statistics \& Data Science at Yale University. He is currently the Chair of the Department of Biomedical Engineering. Professor Duncan received his B.S.E.E. with honors from Lafayette College (1973),  his M.S. (1975) in Electrical Engineering from the University of California, Los Angeles and his Ph.D. from the University of Southern California (1982) in Electrical Engineering. Dr. Duncan’s research efforts have been in the areas of computer vision, image processing, and medical imaging, with an emphasis on biomedical image analysis. These efforts have included the segmentation of deformable structure from 3D image data, the tracking of non-rigid motion/deformation from spatiotemporal images, and the development strategies for image-guided intervention/surgery. Most recently, his efforts have focused on the development of image analysis strategies based on the integration of data-driven machine learning and physical model-based approaches. He has published over 325 peer-reviewed articles in these areas and has been the principal investigator on a number of peer-reviewed grants from both the National Institutes of Health and the National Science Foundation over the past 40 years. Professor Duncan is a Fellow of the Institute of Electrical and Electronic Engineers (IEEE), of the American Institute for Medical and Biological Engineering (AIMBE) and of the Medical Image Computing and Computer Assisted Intervention (MICCAI) Society. In 2012, he was elected to the Council of Distinguished Investigators, Academy of Radiology Research.  In 2014 he was elected to the Connecticut Academy of Science \& Engineering. He currently serves as the co-Editor-in-Chief of Medical Image Analysis, and has served on the editorial boards of the IEEE Transactions on Medical Imaging, the  Journal of Mathematical Imaging and Vision and the Proceedings of the IEEE.  He is a past President of the MICCAI Society and in 2017 received the MICCAI Society’s Enduring Impact Award.
\end{IEEEbiography}

\clearpage
\appendices
\section{Theoretical Analysis}
\label{section:ablation-theory}

In this section, we provide a theoretical justification for our MONA.
We will focus on exploring how the student-teacher architecture in MONA functions in a way that helps the model generalize well in the limited-label setting.

To begin with, we introduce some notations and definitions to facilitate our analysis in this section.
We denote an image by $\mathbf{x}$ and its label (segmentation map) by $\mathbf{y}$. 
A segmentation model is a function $f$ such that $f(\mathbf{x})$ is an output segmentation of the image $\mathbf{x}$. 
Let a (supervised) loss function be denoted by $\ell(\cdot)$, such that the loss of a model $f(\cdot)$ for a sample $(\mathbf{x}, \mathbf{y})$ is given as $\ell(f(\mathbf{x}), \mathbf{y})$.  
Here for generality, we do not specify the exact form of $\ell$, and $\ell$ can be taken as cross-entropy loss, DICE loss, etc.

Since MONA is a semi-supervised framework, we start by considering the supervised part.
Specifically, we adopt the typical setting of empirical risk minimization (ERM).
Assume there are $n$ data labeled samples $\mathcal{S}_n = \{\mathbf{x}_i, \mathbf{y}_i\}_{i=1}^n$ from some population distribution $\mathcal{D}$. 
Let $\mathcal{F}$ denote a function class that contains all the candidate models (\eg, parameterized family of neural networks). 
The empirical risk of a model $f$ (\ie supervised training loss) is defined as:
\begin{align*}
    R_{\ell, f} (\mathcal{S}_n)=\frac{1}{n}\sum_{i=1}^n \ell(f(\mathbf{x}_i), \mathbf{y}_i).
\end{align*}
Supervised learning learns a model by ERM:
\begin{align*}
    \hat{f} = \arg\min_{f \in \mathcal{F}} R_{\ell, f}(\mathcal{S}_n).
\end{align*}

\paragraph{Rademacher complexity} 
To understand how ERM in the supervised part of MONA is related to its generalization ability, we invoke the tool of Rademacher complexity. 
For any function class $\mathcal{G}$, its Rademacher complexity is defined as follows:
\begin{definition}[Rademacher complexity]\label{def: rademacher}
    Assume there is a labeled sample set $\mathcal{S}_n = \{\mathbf{x}_i, \mathbf{y}_i\}_{i=1}^n$, such that $(\mathbf{x}_i, \mathbf{y}_i)$ are $i.i.d.$ samples from some population distribution $\mathcal{D}$. Let $\sigma = \{\sigma_i\}_{i=1}^n$ be $i.i.d.$ Bernoulli random variables. 
    The empirical Rademacher complexity $\mathfrak{R}_{\mathcal{S}_n}(\mathcal{G})$ is defined as:
    \begin{align*}
        \mathfrak{R}_{\mathcal{S}_n}(\mathcal{G}) = \frac{1}{n} \mathbb{E}_{\sigma} \left[ \sup_{g \in \mathcal{G}} \sum_{i=1}^n \sigma_i g (\mathbf{x}_i, \mathbf{y}_i) \right].
    \end{align*}
    The Rademacher complexity $\mathfrak{R}_n (\mathcal{G})$ is defined as:
    \begin{align*}
        \mathfrak{R}_n (\mathcal{G}) = \mathbb{E}_{\mathcal{S}_n \sim \mathcal{D}^n}[\mathfrak{R}_{\mathcal{S}_n}(\mathcal{G})].
    \end{align*}
\end{definition}

The model function class $\mathcal{F}$ and the loss function $\ell$ together induce a function class $\mathcal{G}_{\ell, \mathcal{F}}$ defined as:
\begin{align*}
    \mathcal{G}_{\ell, \mathcal{F}} = \left\{ g:(\mathbf{x}, \mathbf{y})\to \ell(f(\mathbf{x}), \mathbf{y}) \ \middle  | \ f \in \mathcal{F}\right\}.
\end{align*}
The power of Rademacher complexity is that it can relate
the generalization ability of the model learned by ERM with the training loss and sample size. 
\begin{theorem}[\cite{bartlett2002rademacher}]
    Let $0 < \delta < 1$. With probability at least $1-\delta$ over the distribution of the sample set $\mathcal{S}_n$, for all $g \in \mathcal{G}_{\ell,\mathcal{F}}$ such that $g(\mathbf{x},\mathbf{y}) = \ell(f(\mathbf{x}),\mathbf{y})$, it holds that:
    \begin{align*}
        \mathbb{E}_{\mathcal{D}} [g(\mathbf{x},\mathbf{y})] \leq R_{\ell, f} (\mathcal{S}_n) + 2 \mathfrak{R}_n(\mathcal{G}_{\ell, \mathcal{F}}) + \sqrt{\frac{\log(1/\delta)}{n}} . 
    \end{align*}
\end{theorem}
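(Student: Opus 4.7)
The plan is to follow the classical two-step argument of Bartlett and Mendelson: first establish concentration of the worst-case generalization gap through McDiarmid's bounded-differences inequality, then control its expectation via a symmetrization argument that introduces the Rademacher variables appearing in Definition \ref{def: rademacher}. Throughout I would implicitly assume the loss is bounded (e.g., $\ell \in [0,1]$), which is needed for the stated constant to hold.

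First I would define the uniform deviation functional
\begin{align*}
    \Phi(\mathcal{S}_n) \;=\; \sup_{g \in \mathcal{G}_{\ell,\mathcal{F}}} \Bigl\{ \mathbb{E}_{\mathcal{D}}[g(\mathbf{x},\mathbf{y})] - \tfrac{1}{n}\textstyle\sum_{i=1}^n g(\mathbf{x}_i,\mathbf{y}_i) \Bigr\}.
\end{align*}
Replacing any single sample $(\mathbf{x}_i,\mathbf{y}_i)$ in $\mathcal{S}_n$ alters $\Phi$ by at most $1/n$, since the supremum of losses is bounded by $1$. McDiarmid's inequality then yields, with probability at least $1-\delta$, that
\begin{align*}
    \Phi(\mathcal{S}_n) \;\leq\; \mathbb{E}_{\mathcal{S}_n}[\Phi(\mathcal{S}_n)] + \sqrt{\tfrac{\log(1/\delta)}{2n}}.
\end{align*}

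Next I would bound $\mathbb{E}[\Phi(\mathcal{S}_n)]$ using a ghost-sample symmetrization. Introduce an independent copy $\mathcal{S}_n' = \{(\mathbf{x}_i',\mathbf{y}_i')\}_{i=1}^n \sim \mathcal{D}^n$. Since $\mathbb{E}_{\mathcal{D}}[g] = \mathbb{E}_{\mathcal{S}_n'}[\tfrac{1}{n}\sum_i g(\mathbf{x}_i',\mathbf{y}_i')]$, Jensen's inequality and pushing the supremum inside give
\begin{align*}
    \mathbb{E}[\Phi(\mathcal{S}_n)] \;\leq\; \mathbb{E}_{\mathcal{S}_n,\mathcal{S}_n'}\Bigl[\sup_{g \in \mathcal{G}_{\ell,\mathcal{F}}} \tfrac{1}{n}\textstyle\sum_{i=1}^n \bigl( g(\mathbf{x}_i',\mathbf{y}_i') - g(\mathbf{x}_i,\mathbf{y}_i)\bigr)\Bigr].
\end{align*}
Because $(\mathbf{x}_i,\mathbf{y}_i)$ and $(\mathbf{x}_i',\mathbf{y}_i')$ are exchangeable, inserting i.i.d.\ Bernoulli signs $\sigma_i$ leaves the distribution unchanged, so the right-hand side equals $2\,\mathfrak{R}_n(\mathcal{G}_{\ell,\mathcal{F}})$ by Definition \ref{def: rademacher}. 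Combining this with the McDiarmid step and rearranging (noting $\mathbb{E}_{\mathcal{D}}[g] - R_{\ell,f}(\mathcal{S}_n) \leq \Phi(\mathcal{S}_n)$ for every $g$) yields the claimed inequality.

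The main obstacle is a bookkeeping one rather than a conceptual one: the constant $\sqrt{\log(1/\delta)/n}$ in the statement is slightly looser than the $\sqrt{\log(1/\delta)/(2n)}$ produced directly by McDiarmid, so I would need to either absorb the constant or assume a range bound on $\ell$ that gives exactly the stated form. A secondary subtlety is that the theorem is stated for \emph{all} $g$ simultaneously, which requires me to pass from the uniform deviation bound on $\Phi$ back to the per-function statement; this is immediate from the definition of the supremum but should be written out to make clear that the single high-probability event in McDiarmid's step covers every $f \in \mathcal{F}$ at once.
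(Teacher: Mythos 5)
The paper does not prove this statement; it is imported verbatim as a citation to Bartlett and Mendelson, so there is no in-paper argument to compare against. Your proposal is the standard proof from that reference (McDiarmid on the uniform deviation functional, then ghost-sample symmetrization to surface the Rademacher average), and it is correct as written; you also rightly flag the two points the paper leaves implicit, namely that a boundedness assumption on $\ell$ (e.g.\ $\ell \in [0,1]$) is needed for the bounded-differences step, and that McDiarmid actually yields the sharper constant $\sqrt{\log(1/\delta)/(2n)}$, which implies the looser $\sqrt{\log(1/\delta)/n}$ stated here.
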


\begin{remark}\label{remark: interpret rademacher 1}
    The left-hand side of the above result is essentially $\mathbb{E}_{\mathcal{D}} [g(\mathbf{x},\mathbf{y})] = \mathbb{E}_{\mathcal{D}} [\ell(f(\mathbf{x}),\mathbf{y})]$, which is the generalization error of the model $f$.
    Therefore, the above theorem provides an upper bound of the generalization error, which depends on the empirical risk (\ie supervised training loss) $R_{\ell,f}(\mathcal{S}_n)$, the labeled sample size $n$, and the Rademacher complexity $\mathfrak{R}_n(\mathcal{G}_{\ell, \mathcal{F}})$. 
\end{remark}

By Remark~\ref{remark: interpret rademacher 1}, to learn a model with small generalization error, all three terms in the right-hand side need to be made small. 
The supervised training aims at minimizing the empirical risk $R_{\ell,f}(\mathcal{S}_n)$. 
The last term, $\sqrt{\frac{\log(1/\delta)}{n}}$, is decreasing in the labeled sample size $n$.
For medical imaging tasks where labeled samples are limited, the sample size $n$ is a bottleneck and usually cannot be very large.
Therefore, one approach is to reduce the Rademacher complexity of the function class $\mathcal{G}_{\ell, \mathcal{F}}$. 
In the following, we show that the student-teacher architecture of MONA can implicitly reduce the Rademacher complexity of the function class $\mathcal{G}_{\ell, \mathcal{F}}$ by restricting the complexity of the model function class $\mathcal{F}$.
Due to the complex nature of the segmentation task, we consider some common setting where the complexity of the model class $\mathcal{F}$ is determined by certain basis functions.

\paragraph{Function class with basis}

We consider  the model function class $\mathcal{F}$ which contains a finite set of basis functions.
\begin{definition}[Finite-basis function class]\label{definition: basis function}
A real-valued function class $\mathcal{F}$ is a finite-basis function class if it satisfies the following:
There exists a finite set of functions $\{\phi_i\}_{i=1}^m$, such that for any $f\in\mathcal{F}$, there exists a coordinate vector $\{c_i\}_{i=1}^m$ such that $f (\cdot) = \sum_{i=1}^m c_i \phi_i (\cdot)$.
 % Without loss of generality we assume $|c_i|\leq 1$ for all $i$. 
\end{definition}

A finite-basis function class has a desired property which allows us to bound its Rademacher complexity.
For example, the following lemma gives the upper bound of the Rademacher complexity of when the underlying basis are linear functions. 

\begin{lemma}\label{lem: rademacher of finite basis}
    Suppose $\mathcal{F}$ is a finite-basis function class with linear basis functions $\{\phi_i\}_{i=1}^n$ such that $\|\phi_i\|_{\infty} \leq V_i$ for some $V_i > 0$, for all $i=1,\cdots, m$. 
    Assume that for any $f\in\mathcal{F}$, $f = \sum_{i=1}^m c_i \phi_i$ where $|c_i| < C_i$ for some fixed $C_i > 0 $. 
    Then the Rademacher complexity of $\mathcal{F}$ is upper bounded by:
    \begin{align*}
        \mathfrak{R}_n (\mathcal{F}) \leq \Tilde{\mathcal{O}}\left((\sum_{j=1}^m C_j ) \cdot \max_{j\in[m]} V_j \cdot \frac{1}{\sqrt{n}} \right),
    \end{align*}where $\tilde{\mathcal{O}}$ hides logarithmic and constant factors. 
\end{lemma}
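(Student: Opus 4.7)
The plan is to unpack the Rademacher complexity directly using the linear decomposition of $f$ and then bound the resulting scalar random sums via a standard Khintchine-type inequality. Fix a sample $\mathcal{S}_n = \{\mathbf{x}_k\}_{k=1}^n$. By Definition \ref{def: rademacher}, the empirical Rademacher complexity is
\begin{equation*}
\mathfrak{R}_{\mathcal{S}_n}(\mathcal{F}) = \frac{1}{n}\,\mathbb{E}_\sigma \sup_{f\in\mathcal{F}} \sum_{k=1}^n \sigma_k f(\mathbf{x}_k).
\end{equation*}
Substituting $f=\sum_{i=1}^m c_i\phi_i$ with $|c_i|\le C_i$, I would swap the order of summation to get
\begin{equation*}
\mathfrak{R}_{\mathcal{S}_n}(\mathcal{F}) = \frac{1}{n}\,\mathbb{E}_\sigma \sup_{|c_i|\le C_i} \sum_{i=1}^m c_i \Bigl(\sum_{k=1}^n \sigma_k \phi_i(\mathbf{x}_k)\Bigr).
\end{equation*}

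Next, since the inner supremum over the box $\{|c_i|\le C_i\}$ is attained coordinatewise, I would apply the elementary identity $\sup_{|c_i|\le C_i}\sum_i c_i a_i = \sum_i C_i |a_i|$ (valid for any fixed reals $a_i$) and then pass the expectation inside the sum:
\begin{equation*}
\mathfrak{R}_{\mathcal{S}_n}(\mathcal{F}) \le \frac{1}{n}\sum_{i=1}^m C_i\,\mathbb{E}_\sigma\Bigl|\sum_{k=1}^n \sigma_k \phi_i(\mathbf{x}_k)\Bigr|.
\end{equation*}
For each fixed $i$, Jensen's inequality combined with the independence of the $\sigma_k$'s yields $\mathbb{E}_\sigma |\sum_k \sigma_k \phi_i(\mathbf{x}_k)| \le \sqrt{\sum_k \phi_i(\mathbf{x}_k)^2} \le V_i\sqrt{n}$, using $\|\phi_i\|_\infty\le V_i$.

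Combining the pieces gives $\mathfrak{R}_{\mathcal{S}_n}(\mathcal{F}) \le \frac{1}{n}\sum_i C_i V_i \sqrt{n} = \frac{1}{\sqrt{n}}\sum_i C_i V_i \le (\sum_j C_j)\,\max_j V_j\cdot\frac{1}{\sqrt{n}}$. Taking expectation over $\mathcal{S}_n\sim\mathcal{D}^n$ preserves the bound and produces the population Rademacher complexity; the $\tilde{\mathcal{O}}$ notation absorbs any constant and mild logarithmic factors if one prefers to route through a sub-Gaussian maximal inequality over the $m$ basis directions instead of the coordinatewise supremum above.

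The main obstacle I anticipate is not computational but interpretational: one has to recognize that the "linearity" here refers to the linear parametrization of $\mathcal{F}$ via the coefficients $c_i$ (the basis functions $\phi_i$ themselves need only be bounded in sup-norm), which is exactly what licenses the swap of sum and supremum and reduces the argument to $m$ scalar Rademacher sums. A secondary subtlety is whether the contraction to $\max_j V_j$ from $\sum_j C_j V_j$ is tight; while it loses a factor relative to the sharper $\frac{1}{\sqrt n}\sum_j C_j V_j$, it matches the stated $\tilde{\mathcal{O}}$ bound and is the form most convenient when plugging back into the generalization-error chain of Remark~\ref{remark: interpret rademacher 1}.
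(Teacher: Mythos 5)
Your proposal is correct, and it is in fact a more self-contained and slightly more general argument than the one in the paper. The paper proceeds modularly: it cites Bartlett--Mendelson for the bound $\mathfrak{R}_n(\mathcal{H}_j) = \tilde{\mathcal{O}}(C_j V_j/\sqrt{n})$ on each single-coefficient class $\mathcal{H}_j = \{c_j\phi_j : |c_j|\le C_j\}$, then invokes the subadditivity of Rademacher complexity over the Minkowski sum $\mathcal{H}_1+\cdots+\mathcal{H}_m$ (their Theorem~12) and finishes with H\"older, exactly as you do. Your route replaces both citations with a direct computation -- the coordinatewise box-supremum identity $\sup_{|c_i|\le C_i}\sum_i c_i a_i=\sum_i C_i|a_i|$ followed by Jensen on each scalar Rademacher sum -- which buys two things: (i) it yields the clean bound $\frac{1}{\sqrt{n}}\sum_j C_j V_j$ with no hidden logarithmic factors, so the $\tilde{\mathcal{O}}$ in the statement is not actually needed; and (ii) it never uses linearity of the $\phi_i$ themselves, only $\|\phi_i\|_\infty\le V_i$, whereas the paper's proof leans on the linearity hypothesis to quote the complexity of linear classes. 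Your closing remark correctly identifies this: the operative linearity is in the parametrization by the $c_i$, not in the basis. Two cosmetic points: your second display should be an inequality rather than an equality, since the hypothesis only guarantees $\mathcal{F}$ is \emph{contained in} the box-parametrized class (the paper's subadditivity step has the same implicit monotonicity-under-inclusion issue); and the paper's $\sigma_i$ are described as Bernoulli but must be read as Rademacher ($\pm 1$) for your variance computation $\mathbb{E}_\sigma(\sum_k\sigma_k a_k)^2=\sum_k a_k^2$ to go through, which is clearly the intended meaning.
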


\begin{proof}[Proof of Lemma~\ref{lem: rademacher of finite basis}]
    % For a single function $\phi_i$, we have 
    % \begin{align*}
    %     \mathfrak{R}_{\mathcal{S}_n}( \{\phi_i\} ) = \frac{1}{n} \mathbb{E}_{\sigma} \left[ \sup_{g \in \mathcal{G}} \sum_{i=1}^n \sigma_i g (\mathbf{x}_i, \mathbf{y}_i) \right] \leq V_i,
    % \end{align*} since $|\sigma_i| \leq 1$. This implies $\mathfrak{R}_{n}( \{\phi_i\} ) \leq V_i$ according to Definition~\ref{def: rademacher}.
    First, for a linear function $\phi_j(\cdot)$, its Rademacher complexity is $\mathcal{O}(V_j \frac{1}{\sqrt{n}})$ \cite{bartlett2002rademacher}.
    Furthermore, if we define a function class as $\mathcal{H}_j = \{h(\cdot) = c_j \phi_j(\cdot) : |c_j| \leq C_j\}$, then $\mathcal{H}_j$ is again a linear function class, which contains functions with absolute value upper bounded by $C_j V_j$. 
    This implies that
    \begin{align*}
        \mathfrak{R}_n (\mathcal{H}_j) = \Tilde{\mathcal{O}}\left(C_j \cdot V_j \cdot \frac{1}{\sqrt{n}} \right)
    \end{align*}
    Finally, we invoke Theorem~12 in \cite{bartlett2002rademacher}, 
    which shows the subadditivity of Rademacher complexity, \ie, for any function classes $\mathcal{V}$ and $\mathcal{V}'$, we have $\mathfrak{R}_n (\mathcal{V} + \mathcal{V}') \leq \mathfrak{R}_n (\mathcal{V}) + \mathfrak{R}_n(\mathcal{V}')$.
    Applying this to $\mathcal{H}_1 + \cdots + \mathcal{H}_m$ and then using H\"older's inequality over $\sum_{j=1}^m C_j V_j$, we finish the proof.
    % By taking $h$ as $\phi_i$ and using the fact that the Rademacher complexity of a zero function is zero, we get that:
    % \begin{align*}
    %     \mathfrak{R}_{n}( \{c\phi_i: c \in [0, 1]\} ) \leq V_i \cdot \frac{1}{\sqrt{n}}.
    % \end{align*}
    % The rest of the proof follows directly from the sub-linearity of Rademacher complexity. 
    % Specifically, by Definition~\ref{definition: basis function}, 
    % it holds that $f (\cdot) = \sum_{j=1}^m c_j \phi_j (\cdot)$, and therefore
    % \begin{align*}
    %     & \mathfrak{R}_{\mathcal{S}_n} (\mathcal{F})
    %     \\ & = \frac{1}{n} \mathbb{E}_{\sigma} \left[ \sup_{f \in \mathcal{F}} \sum_{i=1}^n \sigma_i f (\mathbf{z}_i) \right]
    %     \\ & = \frac{1}{n} \mathbb{E}_{\sigma} \left[ \sup_{\{c_j\}_{j=1}^m} \sum_{i=1}^n \sigma_i \left(\sum_{j=1}^m c_j \phi_j (\mathbf{z}_i) \right) \right] 
    %     \\ & = \frac{1}{n} \mathbb{E}_{\sigma} \left[ \sup_{\{c_j\}_{j=1}^m} \sum_{j=1}^m c_j \sum_{i=1}^n \sigma_i  \phi_j (\mathbf{z}_i) \right].
    % \end{align*}
    % Since $|c_j| \leq C_j$ for all $j=1,\cdots,m$, it follows that 
    % \begin{align*}
    %     &\mathfrak{R}_{\mathcal{S}_n} (\mathcal{F}) 
    %     \\ & \leq \frac{1}{n} \mathbb{E}_{\sigma} \left[ \sum_{j=1}^m \sup_{\{c_j\}_{j=1}^m}  c_j \sum_{i=1}^n \sigma_i  \phi_j (\mathbf{z}_i) \right]
    %     \\ & = \sum_{j=1}^m \frac{1}{n} \mathbb{E}_{\sigma} \left[ \sup_{c_j}  c_j \sum_{i=1}^n \sigma_i  \phi_j (\mathbf{z}_i) \right]
    %     \\ & \leq \sum_{j=1}^m \frac{1}{n} C_j \mathbb{E}_{\sigma} \left[  \sum_{i=1}^n \sigma_i  \phi_j (\mathbf{z}_i) \right].
    % \end{align*}
\end{proof}

With the finite-basis function class defined, we now consider the student-teacher architecture of MONA and show that it can help reduce the effective size of the basis, thus reducing the Rademacher complexity in certain examples.
To this end, we point out that the student-teacher architecture in MONA is a modified version of self-distillation \cite{furlanello2018born, zhang2020self}.
Specifically, self-distillation refers to the case where the \textit{student} model and the \textit{teacher} model share the same architecture, and the predictions of the trained student model are fed back in as new target values for retraining iteratively. 
Specifically, denote by $f_t$ the model at the $t$-th iteration, the simplest form of self-distillation is called one-step self-distillation, which updates the model at the $(t+1)$-st iteration by 
$f_{t+1} = \arg\min_{f\in\mathcal{F}} \sum_{i=1}^n \ell(f(\mathbf{x}_i, \mathbf{y}_{i,t}))$, where $\mathbf{y}_{i,t} = f_{t}(\mathbf{x}_i)$ is the pseudo-label generated by $f_t$. 
The student-teacher architecture in MONA is multi-step generalization of self-distillation.
In MONA, the teacher model is set to be the EMA of the student, and its features are used to implement contrastive learning for the student model. 
In other words, if the student model at the $(t+1)$-st iteration is denoted by $f_{t+1}$, then the teacher model can be denoted as $\sum_{j=0}^w \gamma_j f_{t-j}$ by the definition of EMA. 
Furthermore, the teacher model of MONA is used to generate positive/negative examples for contrastive loss. This is a bit different from the pseudo-labels in the vanilla self-distillation.
Although this is not exactly the same as $f_t$, the learning dynamic is similar. 
This explains that the student-teacher architecture in MONA is indeed self-distillation.

We now demonstrate how self-distillation can help reduce the size of the basis.
Since self-distillation and EMA-based self-distillation are extremely complicated to analyze directly, there is a lack of theoretical study. 
One notable exception is \cite{mobahi2020self}, which considers a simple example of kernel-regularized regression.
Therefore, we follow the framework of \cite{mobahi2020self} and show that under their proposed example, self-distillation can reduce the size of the basis. 
Specifically, consider the following problem:
\begin{align}\label{eq: regularized problem}
    f^{*} & = \arg\min_{f \in \mathcal{F}} \int_{\mathbf{x}} \int_{\mathbf{x}'} \kappa(\mathbf{x}, \mathbf{x}') f(x) f(x') d \mathbf{x} d \mathbf{x}' , \notag
    \\  &\quad  s.t.  \ \ \frac{1}{n} \sum_{i=1}^n \| f(\mathbf{x}_i) - \mathbf{y}_i \|_2^2 \leq \epsilon,
\end{align}where $\kappa(\cdot,\cdot)$ is a positive definite kernel function. 
The above is a simple regularized regression problem, where the loss function $\ell$ is the $\ell_2$ loss.
% Although the $\ell_2$ loss differs from the contrastive loss of MONA, it is actually closely related as shown by recent work.

For the problem described above, it can be shown that the solution via self-distillation has a closed-form solution.
\begin{proposition}[\cite{mobahi2020self}]\label{prop: form of ft}
    For the problem defined by Eqn.\ref{eq: regularized problem}, suppose we solve it using one-step self-distillation.
    Then there exists matrix-valued function $G(\mathbf{x})$, matrices $\mathbf{V}$, $\mathbf{D}$, and $\{\mathbf{A}_\tau\}_{\tau=0}^t$. Here we will not introduce the exact form of these matrices here since this not our focus. We refer readers to section 2.2 of \cite{mobahi2020self}. For our purpose, we will only discuss the property of the matrices $\{\mathbf{A}_\tau\}_{\tau=0}^t$.,
    such that:
\begin{align}\label{eq: ft}
    f_t (\mathbf{x}) = \mathbf{G}(\mathbf{x})^\top \mathbf{V}^\top \mathbf{D}^{-1} (\Pi_{\tau=0}^t \mathbf{A}_\tau) \mathbf{V} \mathbf{Y}_0, 
\end{align} where $\mathbf{Y}_0$ is the unknown vector of ground truth labels for $\mathbf{x}_1, \cdots, \mathbf{x}_n$.
\end{proposition}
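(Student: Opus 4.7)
The plan is to leverage the RKHS structure of the problem together with an induction argument on the self-distillation iterations. First, I would convert the constrained optimization in Eqn.~\ref{eq: regularized problem} into its unconstrained Lagrangian form $\min_f \int\int \kappa(\mathbf{x},\mathbf{x}')f(\mathbf{x})f(\mathbf{x}')d\mathbf{x} d\mathbf{x}' + \lambda \cdot \frac{1}{n}\sum_i \|f(\mathbf{x}_i) - \mathbf{y}_i\|_2^2$, where $\lambda > 0$ is the KKT multiplier tied to the tolerance $\epsilon$. Since the regularizer is the squared RKHS norm induced by the Green's function of $\kappa$, the representer theorem guarantees that the minimizer admits the expansion $f_1(\mathbf{x}) = \sum_{i=1}^n c_i \kappa(\mathbf{x}, \mathbf{x}_i)$, and the optimal coefficient vector $\mathbf{c}$ can be found by plugging this ansatz back in and solving the resulting linear system in terms of the Gram matrix $\mathbf{K}_{ij} = \kappa(\mathbf{x}_i, \mathbf{x}_j)$.

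Second, I would collect the closed form $f_1(\mathbf{x}) = \mathbf{G}(\mathbf{x})^\top \mathbf{B}\, \mathbf{Y}_0$ for a suitable matrix $\mathbf{B}$ depending on $\mathbf{K}$ and $\lambda$, where $\mathbf{G}(\mathbf{x})$ stacks the kernel evaluations $\kappa(\mathbf{x}, \mathbf{x}_i)$. Diagonalizing $\mathbf{K}$ (or the appropriate symmetric operator) as $\mathbf{V}^\top \mathbf{D} \mathbf{V}$ in an orthonormal eigenbasis will rewrite the one-step operator on the evaluation vector $(f(\mathbf{x}_1), \ldots, f(\mathbf{x}_n))$ as $\mathbf{V}^\top \mathbf{D}^{-1} \mathbf{A}_0 \mathbf{V}$ acting on $\mathbf{Y}_0$, where $\mathbf{A}_0$ is diagonal with entries that are smooth functions of the eigenvalues of $\mathbf{K}$ and the Lagrange multiplier $\lambda_0$ chosen at iteration $0$.

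Third, I would iterate. At step $t+1$, self-distillation replaces the label vector $\mathbf{Y}_0$ by the student's prediction vector $\mathbf{Y}_t = (f_t(\mathbf{x}_1),\ldots,f_t(\mathbf{x}_n))$ and re-solves the same constrained problem, possibly with a new multiplier $\lambda_{t+1}$. Because the representer-theorem expansion and the eigenbasis of $\mathbf{K}$ do not depend on the labels, the update on $\mathbf{Y}_t$ is again a linear map of the form $\mathbf{V}^\top \mathbf{D}^{-1} \mathbf{A}_{t+1} \mathbf{V}$ composed with itself. Induction on $t$ then yields $\mathbf{Y}_t = \mathbf{V}^\top \mathbf{D}^{-1}(\prod_{\tau=0}^t \mathbf{A}_\tau) \mathbf{V} \mathbf{Y}_0$, and substituting into the representer expansion produces Eqn.~\ref{eq: ft}.

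The main obstacle I expect is bookkeeping rather than conceptual: carefully tracking how the Lagrange multiplier $\lambda_\tau$ at each distillation step enters the diagonal matrix $\mathbf{A}_\tau$, and verifying that the one-step map is exactly a left multiplication by $\mathbf{V}^\top \mathbf{D}^{-1} \mathbf{A}_\tau \mathbf{V}$ rather than a more complex affine map. In particular, I would need to confirm that the telescoping product $\prod_{\tau=0}^t \mathbf{A}_\tau$ arises cleanly because each $\mathbf{A}_\tau$ is diagonal in the same eigenbasis of $\mathbf{K}$ and commutes with $\mathbf{D}^{-1}$; this simultaneous diagonalizability is what makes the entire iterative procedure reduce to a single matrix product, and is the crux of Proposition~\ref{prop: form of ft}.
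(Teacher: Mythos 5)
Your plan reconstructs the argument of the cited source rather than anything proved in this paper: Proposition~\ref{prop: form of ft} is imported verbatim from \cite{mobahi2020self} and no proof is given here, so the only fair comparison is against that reference, whose proof follows exactly your route (Lagrangian/KKT form of Eqn.~\ref{eq: regularized problem}, a representer-type closed form, diagonalization of the Gram matrix, and composition of per-step linear maps into the product $\Pi_{\tau=0}^t \mathbf{A}_\tau$). Two bookkeeping points you flagged as ``the crux'' are indeed where care is needed. First, the representer expansion is in the Green's function $g(\cdot,\mathbf{x}_i)$ of the operator induced by $\kappa$, not in $\kappa(\cdot,\mathbf{x}_i)$ itself; the $\mathbf{G}(\mathbf{x})$ appearing in Eqn.~\ref{eq: ft} is precisely that vector of Green's-function evaluations, so writing the expansion in $\kappa$ conflates the regularizing kernel with its reproducing kernel. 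Second, the one-step map on the prediction vector at the training points is $\mathbf{V}^\top \mathbf{A}_\tau \mathbf{V}$ with $\mathbf{A}_\tau = \mathbf{D}(c_\tau \mathbf{I}+\mathbf{D})^{-1}$ diagonal, and the single factor $\mathbf{D}^{-1}$ enters only once, in the final read-out $\mathbf{G}(\mathbf{x})^\top \mathbf{V}^\top \mathbf{D}^{-1}(\cdots)$; if each step carried its own $\mathbf{D}^{-1}$ as your second paragraph suggests, the product would accumulate spurious $\mathbf{D}^{-1}$ factors and not telescope to the stated form. With those two corrections your sketch matches the original derivation, and the commutativity you identify (all $\mathbf{A}_\tau$ diagonal in the fixed eigenbasis of the Gram matrix) is exactly what makes the iteration collapse to a single matrix product.
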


We point out that, from Proposition~\ref{prop: form of ft}, the only $t$-dependent term is the matrices product $\Pi_{\tau=0}^t \mathbf{A}_\tau$. 
We denote $\mathbf{B}_t = \Pi_{\tau=0}^t \mathbf{A}_\tau$.
Importantly, it has been shown that $\mathbf{B}_t$ becomes sparse over time.
\begin{theorem}[Theorem 5, \cite{mobahi2020self}]\label{thm: shrinking}
    Under the same assumption as Proposition~\ref{prop: form of ft}, $\mathbf{B}_t$ is a diagonal matrix. 
    Furthermore, the diagonal element of $B_t$ satisfies that, for any $j,l$,
    \begin{align*}
        \frac{\mathbf{B}_{t}(j,j)}{\mathbf{B}_{t}(l,l)} \geq \left( \frac{\frac{\|\mathbf{Y}_0\|}{\sqrt{n \epsilon}}-1+\frac{d_{\min}}{d_j}}{\frac{\|\mathbf{Y}_0\|}{\sqrt{n \epsilon}}-1+\frac{d_{\min}}{d_l}} \right)^{t+1},
    \end{align*} where $d_j, d_l>0$ are diagonal elements of the matrix $\mathbf{D}$.
\end{theorem}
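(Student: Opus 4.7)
The plan is to exploit the closed-form expression for $f_t$ from Proposition~\ref{prop: form of ft} together with the KKT conditions of the constrained problem in Eqn.~\ref{eq: regularized problem}. First I would analyze a single self-distillation step. Writing the Lagrangian with multiplier $c_\tau > 0$ for the active constraint, the stationarity condition in the underlying function space yields a closed-form solution $f_\tau$ whose values at the training points $\mathbf{y}_\tau = (f_\tau(\mathbf{x}_1), \ldots, f_\tau(\mathbf{x}_n))^\top$ satisfy a linear update of the form $\mathbf{y}_\tau = \mathbf{M}(c_\tau)\, \mathbf{y}_{\tau-1}$, where $\mathbf{M}(c_\tau)$ depends on the kernel Gram matrix $\mathbf{K}$ and on the scalar multiplier $c_\tau$ only.

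Next I would diagonalize. Decompose $\mathbf{K} = \mathbf{V}\mathbf{D}\mathbf{V}^\top$ with $\mathbf{D} = \mathrm{diag}(d_1, \ldots, d_n)$ and $d_j > 0$. Rotating into the eigenbasis $\mathbf{V}$, the map $\mathbf{M}(c_\tau)$ decouples across eigen-directions, so each $\mathbf{A}_\tau$ is diagonal with $j$-th entry $g(c_\tau, d_j)$ for an explicit scalar function $g$ that is monotone in each of its arguments. Since a product of diagonal matrices is diagonal, $\mathbf{B}_t = \Pi_{\tau=0}^t \mathbf{A}_\tau$ is diagonal, which establishes the first claim. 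For the ratio bound, observe that
\begin{equation*}
\frac{\mathbf{B}_t(j,j)}{\mathbf{B}_t(l,l)} = \Pi_{\tau=0}^t \frac{g(c_\tau, d_j)}{g(c_\tau, d_l)}.
\end{equation*}

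The third step is to convert each factor into the form appearing in the statement. Activating the constraint $\frac{1}{n}\|f_\tau - \mathbf{y}_{\tau-1}\|_2^2 = \epsilon$ and substituting the closed form of $f_\tau$ yields a scalar equation relating $c_\tau^{-1}$ to $\|\mathbf{y}_{\tau-1}\|$. Combining this with the contraction property of self-distillation in each eigen-direction, which gives $\|\mathbf{y}_{\tau-1}\| \le \|\mathbf{Y}_0\|$ uniformly in $\tau$, produces a uniform upper bound on $c_\tau^{-1}$ expressible in terms of $\|\mathbf{Y}_0\|$, $\sqrt{n\epsilon}$, and $d_{\min}$. Inserting this into the factorization above, using the monotonicity of $g$ in its first argument, and collecting the $t+1$ identical factors yields precisely the $(t+1)$-st power of the ratio written in the theorem.

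The main obstacle will be the analysis of the Lagrange multiplier recursion. Because $c_\tau$ is defined implicitly by the active constraint at step $\tau$ and depends on all earlier multipliers through $\mathbf{y}_{\tau-1}$, disentangling this coupling while keeping the estimate iteration-uniform is the technical heart of the proof; it demands a careful monotonicity analysis of the KKT system to ensure that the worst-case bound $\|\mathbf{y}_{\tau-1}\| \le \|\mathbf{Y}_0\|$ remains tight enough to be fed back at every step. Once this is under control, the remainder reduces to routine manipulation of scalar inequalities together with the product structure of $\mathbf{B}_t$.
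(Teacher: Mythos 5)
First, a point of comparison that matters: the paper does not prove this statement at all --- it is quoted, with attribution, as Theorem~5 of \cite{mobahi2020self}, so there is no in-paper argument to measure your proposal against. Judged against the source's actual proof, your outline follows the same route: the KKT system of Eqn.~\ref{eq: regularized problem} gives the kernel-ridge closed form $f_\tau(X)=\mathbf{K}(c_\tau \mathbf{I}+\mathbf{K})^{-1}\mathbf{y}_{\tau-1}$; in the eigenbasis $\mathbf{K}=\mathbf{V}\mathbf{D}\mathbf{V}^\top$ each step acts as $\mathbf{A}_\tau=\mathrm{diag}\bigl(g(c_\tau,d_j)\bigr)$ with $g(c,d)=d/(c+d)$, so $\mathbf{B}_t$ is diagonal; and the ratio bound comes from a $\tau$-uniform bound on the multipliers obtained by combining the tight constraint with the contraction $\|\mathbf{y}_{\tau-1}\|\le\|\mathbf{Y}_0\|$. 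Concretely, the tight constraint forces $c_\tau/(c_\tau+d_{\min})\ge\sqrt{n\epsilon}/\|\mathbf{y}_{\tau-1}\|$, hence $c_\tau\ge d_{\min}/\bigl(\|\mathbf{Y}_0\|/\sqrt{n\epsilon}-1\bigr)$; substituting this threshold into $g(c,d_j)/g(c,d_l)$ produces exactly the bracketed ratio, and the $t+1$ factors give the power. So the skeleton is right, and the ``coupling of multipliers'' you worry about is actually benign --- the contraction gives the uniform bound in one line.

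The genuine gaps are elsewhere. First, you never write $g$ down, and everything hinges on its exact form and on the \emph{direction} of the multiplier bound (a lower bound on $c_\tau$, i.e.\ an upper bound on $c_\tau^{-1}$ as you say, but you should verify which side the constraint actually delivers). Second, the final step is not ``routine manipulation'': the map $c\mapsto g(c,d_j)/g(c,d_l)=(1+c/d_l)/(1+c/d_j)$ is \emph{increasing} in $c$ when $d_j>d_l$ and \emph{decreasing} when $d_j<d_l$, so a one-sided bound on $c_\tau$ yields the per-factor inequality only for one ordering of $(d_j,d_l)$; a statement asserted ``for any $j,l$'' requires this case analysis, and carrying it out explicitly would also reveal that the right-hand side as transcribed here appears to have $j$ and $l$ swapped relative to what the bound $c_\tau\ge d_{\min}/(\|\mathbf{Y}_0\|/\sqrt{n\epsilon}-1)$ actually delivers (try $d_j=1$, $d_l=2$, $d_{\min}=1$, $\|\mathbf{Y}_0\|/\sqrt{n\epsilon}=2$, $c_\tau=1$). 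Until you pin down $g$, the sign of the multiplier bound, and the monotonicity bookkeeping, the proposal is a plausible plan rather than a proof.
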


Theorem~\ref{thm: shrinking} implies the basis reducing the effect of self-distillation. %\chenyu{Not sure}
Specifically, assume without loss of generality that $d_j<d_l$, Theorem \ref{thm: shrinking} implies that $\mathbf{B}_{t}(l,l)\leq \rho_{k,l}^{-t-1} \mathbf{B}_{t}(j,j) $, where $\rho_{k,l} < 1$. 
Therefore, all the small diagonal elements of $\mathbf{B}_t$ will be negligible compared to the largest element. 
In other words, $\mathbf{B}_t$ is a soft-sparse matrix.
Furthermore, from Eqn.~\ref{eq: ft}, we can see that $f_t (\mathbf{x})$ can be viewed as:
\begin{align*}
    f_t (\mathbf{x}) = [\mathbf{G}(\mathbf{x})^\top \mathbf{V}^\top \mathbf{D}^{-1}] [\mathbf{B}_t \mathbf{V} \mathbf{Y}_0], 
\end{align*}where $[\mathbf{B}_t \mathbf{V} \mathbf{Y}_0]$ is a column vector with sparsity (since $\mathbf{B}_t$ is a sparse diagonal matrix). 
Therefore, we can indeed view $f_t (\mathbf{x})$ as a function of finite basis with sparse coordinate vector $[\mathbf{B}_t \mathbf{V} \mathbf{Y}_0]$. 
Specifically, denote the vector $\mathbf{V} \mathbf{Y}_0$ by $\mathbf{u} = [\mathbf{u}(j)]_{j=1}^m$.
Then the vector $[\mathbf{B}_t \mathbf{V} \mathbf{Y}_0]$ can be written as:
\begin{align*}
    [\mathbf{B}_t \mathbf{V} \mathbf{Y}_0] = [\mathbf{B}_t(j,j) \mathbf{u}(j)]_{j=1}^m.
\end{align*}
Since this vector $[\mathbf{B}_t(j,j) \mathbf{u}(j)]_{j=1}^m$ does not involve $\mathbf{x}$, it can be viewed as the coordinate vector, and the entries of $[\mathbf{G}(\mathbf{x})^\top \mathbf{V}^\top \mathbf{D}^{-1}]$ can be viewed as basis functions. 
% By Lemma~\ref{lem: rademacher of finite basis}, 
If the coordinate vector is sparse, then according to Lemma~\ref{lem: rademacher of finite basis}, 
most of $C_i$'s will be negligible, and thus $\sum_{j=1}^m C_j$ becomes small. 
This indicates that, if we view $\mathcal{F}$ as the training-induced model function class, then its Rademacher compleity $\mathfrak{R}_n (\mathcal{F})$ will be actually be small compared to the entire model function class. 
Finally, since the induced function class $\mathcal{G}_{\ell,f}$ is a composition of $\ell$ and $f$, the Rademacher complexity of $\mathcal{G}_{\ell,f}$ is at most  $\mathcal{O} (L_{\ell} \mathfrak{R}_n (\mathcal{F}))$, where $L_{\ell}$ is the Lipschitz constant of $\ell(\cdot, \mathbf{y})$.
Therefore we see that self-distillation (or the student-teacher architecture) can effectively reduce the complexity of the function class and lead to better generalization performance. 

% Specifically, denote the vector $\mathbf{V} \mathbf{Y}_0$ by $\mathbf{u} = [\mathbf{u}(j)]_{j=1}^m$.
% Then the vector $[\mathbf{B}_t \mathbf{V} \mathbf{Y}_0]$ can be written as
% \begin{align*}
%     [\mathbf{B}_t \mathbf{V} \mathbf{Y}_0] = [\mathbf{B}_t(j,j) \mathbf{u}(j)]_{j=1}^m.
% \end{align*}

% To understand why the self-distillation architecture can implicitly reduce the Rademacher complexity of the model family, we consider the following example. 

\end{document}